\newtheorem{example}{Example}
\newtheorem{thm}{Theorem}
\newtheorem{prop}{Proposition}
\theoremstyle{definition}
\newtheorem{definition}{Definition}
\newcommand{\ra}[1]{\renewcommand{\arraystretch}{#1}}
\title{Limited Query Graph Connectivity Test}
\author{Mingyu Guo, Jialiang Li, Aneta Neumann, Frank Neumann, Hung Nguyen
\\
\\
School of Computer and Mathematical Sciences, University of Adelaide, Australia\\
\{mingyu.guo, j.li, aneta.neumann, frank.neumann, hung.nguyen\}@adelaide.edu.au\\
}
\begin{document}
\maketitle

\begin{abstract}

We propose a combinatorial optimisation model called {\em Limited Query Graph
    Connectivity Test}.  We consider a graph whose edges have two possible
    states ({\sc On}/{\sc Off}).  The edges' states are hidden initially. We
    could query an edge to reveal its state.  Given a source $s$ and a
    destination $t$, we aim to test $s-t$ connectivity by identifying either a
    {\em path} (consisting of only {\sc On} edges) or a {\em cut} (consisting
    of only {\sc Off} edges). We are limited to $B$ queries, after which we
    stop regardless of whether graph connectivity is established.  We aim to
    design a query policy that minimizes the expected number of queries.

Our model is mainly motivated by a cyber security use case where we need to
    establish whether attack paths exist in a given network, between a source
    (i.e., a compromised user node) and a destination (i.e., a high-privilege
    admin node).  Edge query is resolved by manual effort from the IT admin,
    which is the motivation behind query minimization.

Our model is highly related to {\em Stochastic Boolean Function
    Evaluation (SBFE)}.  There are two existing exact algorithms for SBFE that
    are prohibitively expensive.  We propose a significantly more scalable
    exact algorithm.  While previous exact algorithms only scale for trivial
    graphs (i.e., past works experimented on at most 20 edges), we empirically
    demonstrate that our algorithm is scalable for a wide range of much larger
    practical graphs (i.e., graphs representing Windows domain networks with tens of
    thousands of edges).

    We also propose three heuristics. Our best-performing heuristic is via limiting the planning horizon of the exact algorithm. The other two are via reinforcement learning (RL) and Monte Carlo tree search (MCTS).  We also derive an algorithm for computing the performance lower bound. Experimentally, we show that all our heuristics are near optimal.  The heuristic building on the exact algorithm {\em outperforms all other heuristics}, surpassing RL, MCTS and eight existing heuristics ported from SBFE and related literature.

\end{abstract}

\section{Introduction}

\subsection{Model Motivation}

We propose a model called {\em Limited Query Graph Connectivity Test}, which is mainly motivated by
a cyber security use case.
We start by describing this use case to motivate our model
and also to better explain the model design rationales.
The main focus of this paper is the theoretical model and
the algorithms behind. Nevertheless, our design choices are heavily influenced
by the cyber security use case.


{\em Microsoft Active Directory (AD)} is
the {\em default} security management system for Windows domain networks. An AD
environment is naturally described as a graph where the nodes are
accounts/computers/groups, and the {\em directed} edges represent {\em accesses}. There are many open source and commercial
tools for analysing AD graphs.  {\sc BloodHound}\footnote{https://github.com/BloodHoundAD/BloodHound} is a popular AD analysis tool that is able to enumerate {\em attack paths}
that an attacker can follow through from a source node
to the admin node.
{\sc
ImproHound}\footnote{https://github.com/improsec/ImproHound} is another tool
that is able to flag {\em tier violations} (i.e., this tool
warns if there exist paths that originate from low-privilege nodes
and reach high-privilege nodes).
Existing tools like the above are able to identify attack paths, but they do not provide {\em directly implementable fixes}. In our context, a fix is a set of edges (accesses) that can be {\em safely} removed to eliminate all attack paths.
Unfortunately, this cannot be found via {\em minimum cut}.
Some edges may appear to be redundant but their removal could cause major disruptions (this is like refactoring legacy code -- it takes effort). Consistent with industry practise, we assume that every edge removal is {\em manually} examined and approved, which is therefore costly.  Our vision is a tool  that acts like an ``intelligent wizard'' that guides the IT admin. In every step, the wizard would propose one edge to remove. The IT admin either approves it or rejects it. The wizard acts {\em adaptively} in the sense that
past proposals and their results determine the next edge to propose.
The wizard's goal is to minimize the expected number of proposals. {\em Ultimately, the goal is to save human effort during the algorithm-human collaboration}. We conclude the process if all attack paths have been eliminated, or we have established that elimination is impossible.\footnote{If we cannot
eliminate all attack paths via access removal that are safe, then the
IT admin needs to resort to more ``invasive'' defensive measures, such as
banning accounts or splitting
an account into two --- one for daily usage and one for admin tasks.}
We also conclude if the number of proposals reaches a preset limit.

\subsection{Formal Model Description}

\begin{definition}[{\bf Limited Query Graph Connectivity Test}]
    The {\em Limited Query Graph Connectivity Test} problem involves a graph $G=(V,E)$, either undirected or directed. There is
a single source $s\in V$ and a single destination $t\in V$.
    Every edge $e \in E$ has two possible states ({\sc On}/{\sc Off)}.
    An edge is {\sc On} with probability $p$, where $p$ is a constant model parameter.
    The edges' states are independent.
An edge's state is hidden before we query it.
    By querying an edge, we reveal its state, which will stay revealed and never changes.
    We aim to design an {\em adaptive} query policy that specifies the query order (i.e., past queries and their results decide the next query).
    There are $3$ terminating conditions:

    \begin{itemize}
        \item A {\em path certificate} is found: Among the already revealed edges, the {\sc On} edges form a {\em path} from $s$ to $t$.

        \item A {\em cut certificate} is found: Among the already revealed edges, the {\sc Off} edges form a {\em cut} between $s$ and $t$.

        \item We stop regardless after finishing $B$ queries.
    \end{itemize}

When there isn't ambiguity, we omit the word ``certificate'' and simply use
    {\em path}/{\em cut} to refer to path/cut certificate.

Our optimisation goal is to design a query policy that minimizes the expected number of
    queries spent before reaching termination.
    It should be noted that we are not searching
    for the shortest path or the minimum cut. Any certificate will do.
    If long
    paths/large cuts can be found using less queries, then we prefer them as our goal is solely to minimize the expected query count.

\end{definition}

\begin{figure}
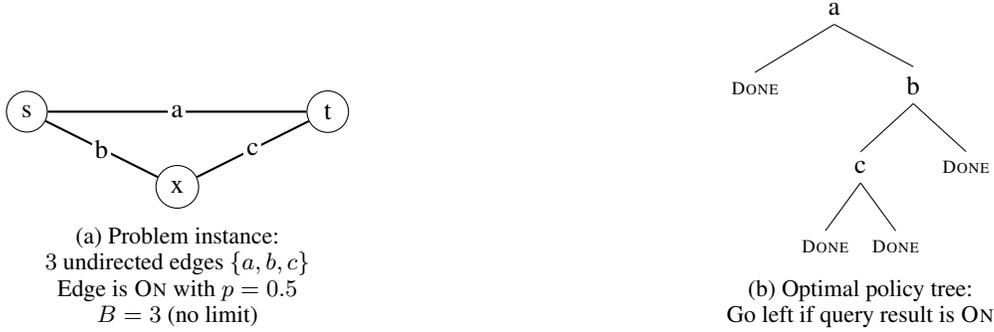

     \centering
     \begin{subfigure}[b]{0.45\textwidth}
\captionsetup{justification=centering}
         \centering
\tikz \graph [edge quotes={fill=white,inner sep=1pt},
          grow down, branch right, nodes={circle,draw}] {
    s --[thick, "a"] t[at={(4,1)}];
    s --[thick, "b"] x[at={(1,-1)}] --[thick, "c"] t
         };
         \caption{Problem instance:\\
         $3$ undirected edges $\{a,b,c\}$\\
         Edge is {\sc On} with $p=0.5$\\
         $B=3$ (no limit)}
         \label{fig:example1left}
     \end{subfigure}
     \hfill
     \begin{subfigure}[b]{0.45\textwidth}
\captionsetup{justification=centering}
         \centering
         \Tree [.a [.{\scriptsize {\sc Done}} ] [.b [.c [.{\scriptsize {\sc Done}} ] [.{\scriptsize {\sc Done}} ] ] [.{\scriptsize {\sc Done}} ] ] ]
         \caption{Optimal policy tree:\\
         Go left if query result is {\sc On}}
         \label{fig:example1right}
     \end{subfigure}
    \caption{{\em Limited Query Graph Connectivity Test} instance}
\label{fig:example1}
\end{figure}

\begin{example} Let us consider the example instance as shown in Figure~\ref{fig:example1left}.
    For this simple instance, the optimal query
    policy is given in Figure~\ref{fig:example1right}, in the form
    of a binary policy tree. The first step is to query edge $a$ in the root position.
    If the query result is {\sc On}, then we go left. Otherwise, we go right.
    The left child of $a$ is {\sc Done}, which indicates that if $a$ is {\sc On},
    then the job is already done ($a$ by itself forms a path).
    The right child of $a$ is $b$, which indicates that if $a$ is {\sc Off},
    then we should query $b$ next. If $b$'s query result is {\sc On}, then
    we query $c$. Otherwise, we go right and reach {\sc Done} (both
    $a$ and $b$ being {\sc Off} form a cut).
    After querying $c$, we reach {\sc Done} regardless of $c$'s state.
    If $c$ is {\sc On}, then we have a path ($b$ and $c$).
    If $c$ is {\sc Off}, then we have a cut ($a$ and $c$).
    The expected number of queries under this optimal policy is
    $1+0.5+0.25=1.75$ (we always query $a$; we query $b$ with $50\%$ chance;
    we query $c$ with $25\%$ chance).
\end{example}

\subsection{Related Models, Key Differences and Rationales}

Our model is highly related to the {\em sequential testing} problem originally from the {\em operation research} community.
An extensive survey can be found in
\cite{Unluyurt04:Sequential}. The sequential testing problem is best
described via the following medical use case. Imagine that a doctor
needs to diagnose a patient for a specific disease. There are $10$ medical tests. Instead of
applying all $10$ tests at once, a cost-saving approach is to test one by one adaptively -- finished tests and their results help pick the next test.
Sequential testing has been applied to iron deficiency anemia diagnosis~\cite{Short13:Iron}. \cite{Yu23:Deep} showed that $85\%$ reduction of medical
cost can be achieved via sequential testing.
In the context of our model, the ``tests'' are the edge queries.

Our model is also highly related to {\em learning with attribute costs} in the context
of {\em machine learning}~\cite{Sun96:Hill, Kaplan05:Learning,
Golovin11:Adaptive}.  Here, the task is to construct
a ``cheap'' classification tree (i.e., shallow decision tree), under the assumption that the
features are costly to obtain.  For example, at the root node, we examine only
one feature and pay its cost.  As we move down the classification tree, every
node in the next/lower layer needs to examine a new feature, which is associated with an additional
cost.  The objective is to
optimize for the cheapest tree by minimizing the expected total feature cost,
weighted by the probabilities of different decision paths, subject to
an accuracy threshold.
In the context of our model, the ``features'' are the edge query results.

The last highly relevant model
is {\em Stochastic Boolean Function Evaluation
(SBFE)}~\cite{Allen17:Evaluation,Deshpande14:Approximation} from
the {\em theoretical computer science} community.
A SBFE instance
involves a Boolean function $f$ with multiple binary inputs and one binary
output.  The input bits are initially hidden and their values follow
independent but not necessarily identical Bernoulli distributions.  Each input
bit has a query cost. The task is to query the input bits in an adaptive order,
until there is enough information to determine the output of $f$. The goal is
to minimize the expected query cost.
In the context of our model, the ``input bits'' are the edges.

All the above models are similar with minor differences on technical details
and they are all highly relevant to our cyber-motivated graph-focused model.
In the appendix, we include more discussion on related research.
Here, we highlight two key differences between our paper and past works.

\vspace{.05in}
\noindent {\bf 1) We focus on {\em empirically scalable exact algorithms}.}

There are two existing exact algorithms from SBFE literature, which are both prohibitively expensive.
\cite{Cox89:Heuristic}'s exact algorithm treats the problem as a MDP and uses
the Bellman equation to calculate the value function for all problem states.
Under our model, with $m$ edges, the number of states is $3^m$ (an edge is hidden, {\sc On}, or {\sc Off}).
\cite{Allen17:Evaluation}
proposed an exact algorithm
with a complexity of $O(n^{2^k})$, where $k$ is either the number of paths or the number
of cuts.
{\em Existing
works on exact algorithms only experimented on tiny instances with at most $20$ edges}~\cite{Fu17:Determining,Bendov81:Branch,Reinwald66:Conversion,Breitbart75:Branch}.
%

Past results mostly focused on heuristics and approximation
algorithms.
\cite{Unluyurt04:Sequential}'s survey 
mentioned
heuristics from~\cite{Jedrzejowicz83:Minimizing,Cox89:Heuristic,Sun96:Hill}.
Approximations algorithms were proposed in~\cite{Allen17:Evaluation,Deshpande14:Approximation,Kaplan05:Learning,Golovin11:Adaptive}.

We argue that
for many applications involving this lineage of models, {\em
algorithm speed is not an important evaluation metric}, which
is why we focus on empirically scalable exact algorithms.
For example, for sequential medical tests,
once a policy tree for diagnosing a specific disease is generated (however slow),
the policy tree can be used for all future patients.
In our experiments, we allocate $72$ hours to the exact algorithm
and we manage to exactly solve
several fairly large graph instances, including one depicting a Windows domain network with $18795$ edges. In comparison, going over $3^{18795}$ states is impossible if we apply
the algorithm from \cite{Cox89:Heuristic}.

Incidentally, while deriving the exact algorithm, we obtain two valuable
by-products: a lower bound algorithm and a heuristic that outperforms all
existing heuristics from literature (we {\em ported} $8$ heuristics to our model).  The exact algorithm works by iteratively generating more
and more expensive policy that eventually converges to the optimal policy.  The
lower bound algorithm is basically settling with the intermediate results.  That
is, if the exact algorithm does not scale (i.e., too slow to converge), then we
stop at any point and whatever we have is a lower bound.  Our best-performing heuristic
also builds on the exact algorithm, which is via limiting the planning horizon
of the exact algorithm.

\vspace{.05in}
\noindent {\bf 2) We introduce a technical concept called {\em query limit}, which enables many positive results.}



{\em The query limit has a strong implication on scalability.}
Our exact algorithm builds on a few technical tricks for scalability, including
a systematic way to identify the relevant edges that may be referenced by the optimal policy
and a systematic way to generate the optimal policy tree structure.
As mentioned earlier, without imposing a query limit,
our algorithm is already capable of optimally solving several fairly large instances, including one with $18795$ edges.
Nevertheless, what really drives up scalability is
the introduction of the query limit. Without the query limit, our exact algorithm
is only scalable for special graphs. On the other hand, as long as the query limit is small, our exact algorithm scales.
It should be noted that this paper is {\bf not} on {\em fixed-parameter analysis} -- our algorithm is {\bf not} {\em fixed-parameter tractable} and the query limit is not the special parameter. The query limit is a technical tool for improving {\bf empirical scalability}.

{\em The query limit is a useful technical tool even for settings without query
limit.} Our lower bound algorithm is more scalable with smaller query limits.
This is particularly nice because for settings without query limit, we could always artificially impose a query limit that is scalable and derive a performance lower bound, as imposing a limit never increases the query count.  We are not aware of any prior work on lower bound for this lineage of models.

Our best-performing heuristic is also based on imposing an artificially small query limit on the exact algorithm (so that it scales -- and we simply follow this limited-horizon exact algorithm).
This way of constructing heuristic should be applicable for settings without query limit.

{\em The query limit is often not a restriction when it comes to graph connectivity test.}
Our experiments demonstrate that for a variety of
                large graphs, the expected query counts required to establish
                graph connectivity are tiny and the query count distributions exhibit clear {\em long-tail} patterns.
                In the appendix, we present histograms of the
                query counts of two large graphs to demonstrate this.
                Essentially, the query limit is not bounding most of the time.

{\em Lastly, the query limit is practically well motivated.} In the context of
our cyber-motivated model, the edge queries are answered by human efforts.
Imposing a query limit is practically helpful to
facilitate this algorithm-human collaboration. We prefer that the IT admin
be able to specify the query limit before launching the interactive session (i.e., ``my time allocation allows at most $10$ queries'').

\subsection{Summary of Results}

\begin{itemize}

    \item
We propose a combinatorial optimisation model called {\em Limited Query
        Graph Connectivity Test}, motivated by a cyber security use case on defending Active Directory managed Windows domain network.

    \item We show that query count minimization is \#P-hard.

    \item
We propose an empirically scalable exact algorithm. It can optimally
        solve practical-scale large graphs when the query limit is small.
        Even when we set the limit to infinity, our algorithm
        is capable of optimally solving several fairly large instances, including
        one with $18795$ edges.
        (Recall that past works on
        exact algorithms only experimented on tiny instances with at most $20$ edges/bits~\cite{Fu17:Determining,Bendov81:Branch,Reinwald66:Conversion,Breitbart75:Branch}.)

    \item Our exact algorithm iteratively generates more and more expensive policy that eventually converges to the optimal policy.
        When the exact algorithm is not scalable, its
        intermediate results serve as performance lower bounds.

    \item We propose three heuristics. Our best-performing heuristic
is via imposing an artificially small query limit on the exact algorithm.
The other two heuristics are based on reinforcement learning (RL) and Monte Carlo tree search (MCTS),
where the action space is reduced with the help of query limit.
We experiment on a
        wide range of practical graphs, including road and power networks,
        Python package dependency graphs and Microsoft Active Directory
        graphs.
        We conduct a comprehensive survey on existing heuristics
        and approximation algorithms on sequential testing, learning
        with attribute costs and stochastic Boolean function evaluation.
        Our heuristic building on the exact algorithm outperforms all, surpassing RL, MCTS and $8$ heuristics ported from literature.

    \item Our techniques have the potential to be applicable to other models related to sequential
testing, learning with attribute costs and stochastic Boolean function
evaluation.  The high-level idea of using query limit
to derive empirically scalable exact algorithm is general.
Using the query limit as the
technical tool to derive high-quality heuristic and performance lower bound is also
general, and it is applicable to settings without query limit.

\end{itemize}

\section{Scalable Exact Algorithm}

We preface our exact algorithm with a hardness result.

\begin{thm}
    \label{thm:sharpp}
    For the Limited Query Graph Connectivity Test problem, it is \#P-hard to compute the minimum expected number of queries.
\end{thm}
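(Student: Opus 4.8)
The plan is a polynomial-time Turing reduction from the \#P-complete problem of \emph{two-terminal network reliability}: given an (undirected) graph $H$ with distinguished vertices $a,b$ and a fixed edge-presence probability $p$ at which the problem remains \#P-hard, compute $R(H)=\Pr[a\leftrightarrow b]$ when each edge of $H$ is independently present with probability $p$ (Valiant; Provan--Ball). We interpret ``present'' as {\sc On}. Write $C^*(I)$ for the minimum expected number of queries of a Limited Query instance $I$, with the bound $B$ set to $I$'s number of edges so that it is never binding, and let $C^*_H$ be the value of $H$ itself viewed as a connectivity-test instance with source $a$ and destination $b$; note $C^*_H>0$ because $H$ has at least one edge.

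From $H$ I would build a constant number of instances, each obtained by adjoining a single fresh edge $e^*$. Two representative ones: in $I_\parallel$ we set $s=a$, $t=b$ and place $e^*$ in parallel with $a,b$, so an $s$--$t$ On-path is either an On-path inside $H$ or $e^*$ being {\sc On}, while an $s$--$t$ Off-cut needs both $e^*$ {\sc Off} and an $a$--$b$ cut inside $H$; in $I_{\mathrm{ser}}$ we instead place $e^*$ in series (a new edge from $b$ to $t$), so an On-path is an $H$-path together with $e^*$ {\sc On} and an Off-cut is an $H$-cut or $e^*$ {\sc Off}. The structural claim is that in each such instance the optimal policy either (i) queries $e^*$ first, or (ii) queries $e^*$ only after the edges of $H$ have already settled the $a$--$b$ connectivity question, and then only if it is still needed; no schedule that interleaves $e^*$ among the edges of $H$ does better. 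Granting this, linearity of expectation yields, for example,
\begin{align}
C^*(I_\parallel)      &= \min\!\bigl(\,1+(1-p)\,C^*_H,\ \ C^*_H+1-R(H)\,\bigr),\\
C^*(I_{\mathrm{ser}}) &= \min\!\bigl(\,1+p\,C^*_H,\ \ C^*_H+R(H)\,\bigr),
\end{align}
since policy (i) in $I_\parallel$ spends one query on $e^*$, terminates immediately with probability $p$, and otherwise faces exactly the standalone task of resolving $H$, whereas policy (ii) resolves $H$ at expected cost $C^*_H$ and then spends one extra query on $e^*$ precisely when $H$ resolved to a cut, i.e.\ with probability $1-R(H)$.

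It remains to extract $R(H)$. Each equation above expresses an oracle value as a minimum of two expressions that are linear in the unknowns $C^*_H$ and $R(H)$; enumerating the constantly many sign patterns and solving the resulting linear system in each (using $C^*_H>0$, and over-determining with a further gadget such as $H$ in parallel with a two-edge {\sc On}-bundle whenever a pattern would otherwise be ambiguous) recovers $R(H)$ in closed form from a constant number of oracle values. Hence a polynomial-time algorithm for the minimum expected query count would compute the \#P-hard quantity $R(H)$ with $O(1)$ oracle calls, proving the theorem. The main obstacle is the structural claim: one must show by an interchange argument that the single ``side'' edge $e^*$ is optimally queried either first or last --- the intuition being that the only outcomes of querying $e^*$ that ever help are ``{\sc On}'', which terminates the process and so is best uncovered as early as possible, and ``{\sc Off} while $H$ has already resolved to a cut'', which is required at the very end regardless.
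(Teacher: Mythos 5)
Your high-level strategy --- a Turing reduction from two-terminal network reliability that augments $H$ with auxiliary edges and reads $R(H)$ off a linear relation between oracle values --- is the same as the paper's. But your specific gadget breaks at the extraction step. Take $p=\tfrac12$. In $I_\parallel$ the ``query $e^*$ first'' branch is active whenever $1+(1-p)C^*_H\le C^*_H+1-R(H)$, i.e.\ whenever $R(H)\le pC^*_H$; since $R(H)\le 1$ and $C^*_H\ge 2$ for every $H$ whose reliability is nontrivial to compute, this holds \emph{always}, so $C^*(I_\parallel)=1+\tfrac12 C^*_H$ is a deterministic function of $C^*_H$ and carries no information about $R(H)$. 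The series gadget degenerates symmetrically ($R(H)\ge 1-(1-p)C^*_H$ is vacuous for $C^*_H\ge 2$). So for essentially every input both oracle values land in the uninformative branch of the $\min$, and your proposed disambiguator --- replacing $e^*$ by a two-edge {\sc On}-bundle --- moves in the wrong direction: making the bypass \emph{more} likely to be {\sc On} makes querying it first an even better gamble. What you need is the opposite: auxiliary structures that are \emph{nearly deterministic} (a bundle of $\Theta(m)$ parallel unit-probability-$\tfrac12$ edges to simulate an always-{\sc On} connector in series with $H$, and $\Theta(m)$-long serial chains to simulate always-{\sc Off} bypasses), so that querying them early is provably near-useless and the optimal policy is forced to resolve $H$ first and touch them only to complete a certificate. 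That is exactly the paper's construction; it yields $q(G'')=q(G)+4-2\,\mathrm{Rel}(G,s,t)\pm O(m/2^{2m})$, and the error is absorbed because $\mathrm{Rel}$ is a multiple of $2^{-m}$.

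A secondary issue: your structural claim (that the lone edge $e^*$ is optimally queried either first or after $H$ is resolved, never interleaved) is asserted with only an intuition. It does happen to be provable for $p=\tfrac12$ by induction on the Bellman recursion --- writing $W(r)=\min\bigl(1+(1-p)V(r),\,V(r)+\Pr[\text{disconnected}\mid r]\bigr)$ and checking the mixed case where the two children of an $H$-query take different branches of the $\min$ --- but that mixed case genuinely uses $p\ge\tfrac12$ for the parallel gadget (and dually $p\le\tfrac12$ for the series one), so the ``interchange argument'' is more delicate than your sketch suggests and would need to be written out. The paper sidesteps this entirely: with near-deterministic gadgets the block structure is enforced by an oracle-relaxation lower bound and an explicit feasible policy upper bound, at the price of an exponentially small error term.
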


The proof is deferred to the appendix.


\subsection{Outer Loop: Identifying Paths/Cuts Worthy of Consideration and Exactness Guarantee}

We first
propose a systematic way for figuring out which paths/cuts are relevant to the
optimal query policy.
A path/cut is relevant if it is referenced by the query policy -- the policy
terminates after establishing the path/cut. In other words, the policy
tree's leaf nodes correspond to the path/cut.
Generally speaking, when the query limit
is small, only a few paths/cuts are relevant.
As an extreme example, if the query limit is $2$, then at most there are $4$ leaf nodes in the optimal policy tree, which
correspond to a total of $4$ paths/cuts ever referenced.
The challenge
is to identify, continuing on the above example, which $4$ are used among the {\em exponential} number of paths/cuts.
Restricting attention to relevant paths/cuts will greatly reduce the
search space, as we only need to focus on querying the edges that are part of them.

\vspace{.05in}
\noindent{\em Key observation:}
Our terminating condition is to find a path, or find a cut, or reach
the query limit.
Finding a path can be reinterpreted as {\em disproving all cuts}.
Similarly, finding a cut can be reinterpreted as {\em disproving all paths}.
\vspace{.05in}

Suppose we have a path set $P$ and a cut set $C$, we introduce a {\em cheaper}
({\em cheaper or equal}, to be more accurate)
problem, which is to come up with an optimal query policy that disproves all
paths in $P$, or disproves all cuts in $C$, or reach the query limit. We use
$\pi(P, C)$ to denote the optimal policy for this cheaper task, and use $c(P, C)$
to denote the minimum number of queries in expectation under $\pi(P,C)$.
We use $\pi^*$ to denote the optimal policy for the original problem,
and use $c^*$ to denote the minimum number of queries in expectation
under $\pi^*$.

\begin{prop}
    \label{prop:cheaper}
    Given $P\subset P'$ and $C\subset C'$, we must have $c(P, C) \le c(P', C')$,
    which also implies $c(P, C) \le c^*$.
\end{prop}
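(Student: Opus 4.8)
The plan is to establish the monotonicity statement $c(P,C)\le c(P',C')$ for all $P\subseteq P'$, $C\subseteq C'$, and then obtain $c(P,C)\le c^*$ as the special case $P'=\{\text{all }s\text{-}t\text{ paths}\}$, $C'=\{\text{all }s\text{-}t\text{ cuts}\}$: by the key observation preceding the proposition, for this choice the ``cheaper'' task (disprove all of $P'$ = find a cut; disprove all of $C'$ = find a path; or reach the limit) is \emph{literally} the original problem, so $c(P',C')=c^*$, and monotonicity finishes it. (The $\subset$ in the statement should be read as $\subseteq$; the argument only ever yields $\le$, so this is harmless, and it is needed to allow $P=P'$ in the corollary.)

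For the monotonicity step I would take $\pi':=\pi(P',C')$ and construct from it a policy $\hat\pi$ that is admissible for the $(P,C)$-task and costs no more. The construction: $\hat\pi$ issues exactly the query that $\pi'$ would issue given the history observed so far, but halts at the \emph{earliest} moment the $(P,C)$-termination condition holds, i.e.\ all paths in $P$ are disproved, or all cuts in $C$ are disproved, or $B$ queries have been made. I would then check $\hat\pi$ is a legitimate $(P,C)$-policy: it never makes more than $B$ queries because $\pi'$ does not; and whenever $\pi'$ itself terminates --- because all of $P'$ is disproved, or all of $C'$ is disproved, or the limit is reached --- the $(P,C)$-condition already holds as well, since $P\subseteq P'$ and $C\subseteq C'$ make ``disprove all of $P$'' and ``disprove all of $C$'' weaker requirements. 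Hence $\hat\pi$ is well defined (its stopping rule never ``overshoots'' $\pi'$) and valid.

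The cost comparison is then a pointwise coupling: fix any realization of all edge states. Under it, $\pi'$ follows a deterministic query sequence and stops after some number $\tau'$ of queries; $\hat\pi$ follows the \emph{same} sequence (the histories it sees coincide with $\pi'$'s) but stops after $\tau\le\tau'$ queries. Taking expectations over the i.i.d.\ edge states gives $\mathbb{E}[\tau]\le\mathbb{E}[\tau']=c(P',C')$, and since $\hat\pi$ is admissible for $(P,C)$ we get $c(P,C)\le\mathbb{E}[\tau]\le c(P',C')$. I would also remark in passing that $\pi(P,C)$ is always well defined, since ``query up to $B$ edges and stop'' is always admissible, so the minimum over policies is attained.

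The one point requiring care --- the ``main obstacle'', though it is a mild one --- is justifying that $\hat\pi$'s stopping time really is no larger than $\pi'$'s, i.e.\ that stopping as soon as the $(P,C)$-condition holds cannot occur after $\pi'$ has already stopped for the $(P',C')$-reason. This rests on the containments $P\subseteq P'$, $C\subseteq C'$ together with the monotonicity of ``being disproved'' under revelation: once an edge is revealed {\sc Off} the paths through it stay disproved, and once revealed {\sc On} the cuts through it stay disproved, so disproving the larger family is never easier than disproving the smaller one. Everything else is bookkeeping.
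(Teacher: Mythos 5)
Your proposal is correct and is essentially the paper's own argument, just spelled out in more detail: the paper likewise applies $\pi(P',C')$ to the smaller instance, observes it can only terminate earlier (your coupling/stopping-time elaboration makes this precise), and identifies $c^*$ with $c(\text{all paths},\text{all cuts})$ via the key observation.
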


\begin{proof}
The optimal policy $\pi(P', C')$ can still be applied
to the instance with less paths/cuts to disprove. $\pi(P',C')$ has the potential to terminate even earlier when there are less to disprove.
Lastly, $c^*$ is just $c(\text{all paths},\text{all cuts})$.
\end{proof}

Suppose we have access to a subroutine that calculates $\pi(P,C)$ and $c(P,C)$,
which will be introduced soon in the remaining of this section.
Proposition~\ref{prop:cheaper} leads to the following iterative exact algorithm.
Our notation follow Figure~\ref{fig:example1}.

\begin{enumerate}

    \item We initialize an arbitrary path set $P$ and an arbitrary cut set $C$. In our experiments,
        we simply start with singleton sets (one shortest path and one minimum cut).

    \item Compute $\pi(P,C)$ and $c(P,C)$; $c(P,C)$ becomes the best lower bound
        on $c^*$ so far.

    \item Go over all the leaf nodes of the policy tree behind $\pi(P,C)$ that
        terminate due to successfully disproving either $P$ or $C$.
        Suppose we are dealing with a node $x$ that has disproved all paths in $P$.
        At node $x$, we have the results of several queries (i.e., when going
        from the policy tree's root node to node $x$, any left turn corresponds to an {\sc On} edge, and any
        right turn corresponds to an {\sc Off} edge).
        We check whether the confirmed {\sc Off} edges at node $x$ form a cut.
        If so, then we have not just disproved all paths {\bf in} $P$,
        but also disproved all paths {\bf outside} $P$ as well.
        If the confirmed {\sc Off} edges do not form a cut, then that means
        we can find a path that has yet been disproved.
        We add this path to $P$ --- our policy made a mistake (prematurely declared {\sc Done}) because we have
        not considered this newly identified path.

        We add new cuts to $C$ using the same method.

    \item
    If new paths/cuts were added, then go back to step $2$ and repeat.
        If no new paths/cuts were added, then that means when $\pi(P,C)$
        terminates, it has found a path, or found a cut, or reached the query limit. So $\pi(P,C)$ is a feasible
        query policy for the original problem. Proposition~\ref{prop:cheaper} shows
        that $c(P,C)$ never exceeds $c^*$, so $\pi(P,C)$ must be an optimal policy
        for the original problem.
\end{enumerate}

In the appendix, we show the running details of our exact algorithm
on a graph called {\sc euroroad} (a road network with $1417$ undirected edges~\cite{konect}).
For $B=10$, when the exact algorithm terminates, the final path set size
is only $39$ and the final cut set size is only $18$. That is, our proposed approach successfully
picked out only a small number of relevant paths/cuts.

\subsection{Inner Loop: Designing Policy Tree Structure}

We now describe the subroutine that computes $\pi(P,C)$ and $c(P,C)$ given the
path set $P$ and the cut set $C$ as input.

This subroutine is further divided into two subroutines.  First, we need to come up with an optimal {\em tree structure} (i.e., the shape of the policy tree).
After that, we need to decide how to optimally place the queries into the tree structure.

We first focus on finding the optimal tree structure.
Every policy tree is binary,
but it is generally not going to be a
complete tree of $B$ layers (unless we hit the query limit $100\%$ of the time). For example, for the optimal policy tree in
Figure~\ref{fig:example1right},
the left branch has only one node,
while the right branch is deeper and more complex.
The reason we care about the tree structure is that a small tree involves less decisions, therefore easier to design than a complete
tree.


Given $P$ and $C$, we define a policy tree to be {\em correct} if all the {\sc Done}
nodes are labelled correctly (i.e., it is indeed done when the policy tree claims done). A correct policy tree does not have to reach conclusion in every leaf node (we allow a chain of queries to end up inconclusively). We also allow {\sc Done} nodes to have children, but they must all be {\sc Done}.
We use the instance from Figure~\ref{fig:example1left} as an example.
Suppose $P$ is the set of all paths and $C$ is the set of all cuts.
Figure~\ref{fig:correct} is a correct policy tree, because when
it claims {\sc Done}, $a$ is {\sc On}, which correctly disproves every cut in $C$.
As you can see, we allow the right branch to contain a partial policy that is
inconclusive (after querying $b$, the remaining policy is not specified).
Figure~\ref{fig:incorrect} is an incorrect policy tree because the {\sc Done}
node
on the right-hand side is a wrong claim. $a$ being {\sc Off} is not enough
to disprove every path in $P$.

\begin{figure}[h]
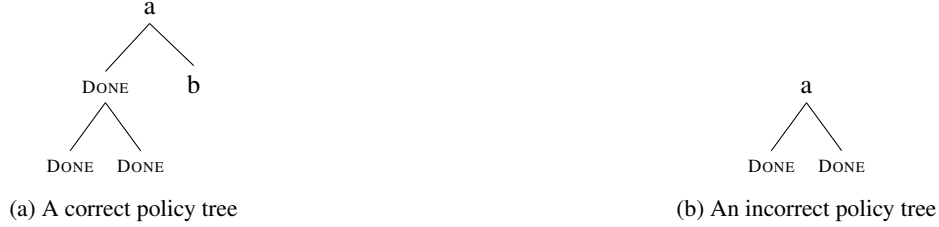

\captionsetup{justification=centering}
     \centering
     \begin{subfigure}[b]{0.45\textwidth}
         \centering
         \Tree [.a [.{\scriptsize {\sc Done}} [.{\scriptsize {\sc Done}} ] [.{\scriptsize {\sc Done}} ] ] [.b ] ]

         \caption{A correct policy tree}
         \label{fig:correct}
     \end{subfigure}
     \hfill
     \begin{subfigure}[b]{0.45\textwidth}
         \centering
         \Tree [.a [.{\scriptsize {\sc Done}} ] [.{\scriptsize {\sc Done}} ] ]
         \caption{An incorrect policy tree}
         \label{fig:incorrect}
     \end{subfigure}
    \caption{We consider the instance in Figure~\ref{fig:example1left}.\\
    $P=\{(a),(b,c)\}$ and $C=\{(a,b),(a,c)\}$}
\label{fig:example2}
\end{figure}

Any policy tree for the original problem must be a correct policy tree.
The cost of a correct policy tree has the following natural definition. {\sc Done} nodes
have no costs. Every query node costs $1$ and is weighted according
to the probability of reaching the node.
Given a tree structure $S$, we define $T(S,P,C)$ to be the minimum
cost correct policy tree, as the result of optimally ``filling in'' queries and {\sc Done} into the tree structure $S$, subject
to the only constraint that the resulting tree must be correct with respect to $P$ and $C$.
We define $c(S,P,C)$ to be the cost of $T(S, P, C)$.

\begin{prop} For any tree structure $S$, path set $P$ and cut set $C$, we have $c(S, P, C) \le c(P, C)$.
    \label{prop:cheaper2}
\end{prop}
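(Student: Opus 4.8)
The plan is to follow the same template as the proof of Proposition~\ref{prop:cheaper}: I will exhibit one particular correct filling of the structure $S$ whose cost is at most $c(P,C)$, and since $c(S,P,C)$ is by definition the cost of the \emph{cheapest} correct filling of $S$, the inequality follows immediately. The object I will transplant is the optimal policy $\pi(P,C)$ itself, whose own policy tree has some structure $S^{*}$ and cost exactly $c(P,C)$; the filling of $S$ will be obtained by ``projecting'' $\pi(P,C)$ onto $S$, truncating where $S$ is smaller than $S^{*}$ and padding with \textsc{Done} where $S$ is larger.

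First I would fix the node correspondence between $S$ and $S^{*}$: identify their roots, and identify a child in $S$ with the corresponding child in $S^{*}$ whenever the latter exists. This splits the nodes of $S$ into those that have a counterpart in $S^{*}$ and those that lie in a subtree hanging off a leaf of $S^{*}$. On the first group I copy the label assigned by $\pi(P,C)$, with one adjustment: where $S^{*}$ has a query node but $S$ has only a leaf there, I place that same query and leave it inconclusive. This is permitted because a correct policy tree is explicitly allowed to end a chain of queries inconclusively (cf.\ Figure~\ref{fig:correct}), and it does not increase cost --- in $\pi(P,C)$ that node is the root of a subtree whose cost is at least its own visit probability, which is exactly what the single query at the leaf contributes in $S$. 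Where $S^{*}$ has a \textsc{Done} leaf I place \textsc{Done}; the set of revealed edges at that node is identical to the one in $\pi(P,C)$, so the claim is equally valid.

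For the second group I would argue the key structural point: such nodes can only occur strictly below a \textsc{Done} leaf of $S^{*}$. Indeed, a leaf of $\pi(P,C)$ that is not a \textsc{Done} leaf is reached only because the query budget is exhausted, hence sits at depth exactly $B$; since $S$ itself has depth at most $B$, it cannot extend below such a leaf. Therefore every node in the second group can be set to \textsc{Done}, together with all of its descendants, at zero cost and while respecting the ``children of \textsc{Done} are \textsc{Done}'' rule. The resulting filling of $S$ is then correct, and a charging argument bounds its cost: each of its query nodes is charged to a distinct query node (or whole subtree) of $\pi(P,C)$ of no smaller visit probability, while every newly introduced \textsc{Done} node is free; hence the total cost is at most $c(P,C)$, giving $c(S,P,C)\le c(P,C)$.

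The part I expect to require the most care is the handling of where $S$ and $S^{*}$ diverge --- specifically, cleanly establishing that the ``extra'' part of $S$ always hangs under a \textsc{Done} leaf (this is exactly where the depth-$B$ bound on policy trees and the fact that the policy stops only at a certificate or the budget limit are both used), and verifying that truncating $\pi(P,C)$ to fit a smaller $S$ yields a genuinely \emph{correct} tree rather than an incorrectly premature \textsc{Done}. Once the projection is defined carefully, the cost comparison is a routine accounting of visit probabilities.
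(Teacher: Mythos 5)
Your proposal is correct and follows essentially the same route as the paper's proof: transplant the optimal policy tree $T^*$ behind $c(P,C)$ onto $S$, pad the positions of $S$ not covered by $T^*$ with \textsc{Done} (valid because $T^*$'s leaves are either \textsc{Done} or at the depth limit $B$, which $S$ does not exceed), and drop or truncate the parts of $T^*$ that fall outside $S$, which can only decrease cost. The paper states this more tersely; your version merely makes explicit the node correspondence, the depth-$B$ argument, and the per-node cost accounting.
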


\begin{proof}
The policy tree $T^*$ behind $c(P,C)$ can be imposed onto the tree structure $S$
and results in a correct policy tree. For example, Figure~\ref{fig:correct} is
a correct policy tree as a result of imposing Figure~\ref{fig:example1right} to
this new structure. Basically, any node position in $S$ that is not in $T^*$ can be
    filled in with {\sc Done} (since $T^*$'s leave must be {\sc Done} or has already reached the depth limit). Adding {\sc Done} does not incur any cost. Some query
nodes from $T^*$ may be dropped if their positions are not in $S$, which reduces cost.
So $T^*$ can be converted to a correct policy tree with structure $S$ that is not more expensive.
\end{proof}

\begin{prop}
    \label{prop:cheaper3}
    If $S$ is a subtree of $S'$, then $c(S, P, C) \le c(S', P, C)$.
\end{prop}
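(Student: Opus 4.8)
The plan is to mirror the proof of Proposition~\ref{prop:cheaper2}: start from the optimal correct policy tree that realizes $c(S', P, C)$ and \emph{truncate} it down onto the smaller structure $S$, then argue the truncation is still correct with respect to $P$ and $C$ and costs no more. Concretely, let $T' = T(S', P, C)$, an optimal correct policy tree with underlying structure $S'$ and cost $c(S', P, C)$. Since $S$ is a subtree of $S'$ (sharing the common root), every node position of $S$ is also a position of $S'$; define $T$ to be the restriction of $T'$ to exactly the positions of $S$, keeping the same query/{\sc Done} label $T'$ assigns to each such position and discarding everything in the positions of $S'$ outside $S$. By construction $T$ has underlying structure $S$ and is a legitimate ``filling in'' of $S$ — no edge gets queried twice along any root-to-node path, because that already held in $T'$.

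Next I would check that $T$ is \emph{correct}. The key observation is that truncation never deletes an ancestor of a surviving node, so for any position retained in $T$ the set of edges revealed along the path from the root to that position is identical in $T$ and in $T'$. Hence every node labelled {\sc Done} in $T$ was a {\sc Done} node of $T'$ with the same revealed-edge set, so its claim (all cuts in $C$ disproved, or all paths in $P$ disproved) remains valid. The slack permitted by the definition of ``correct'' only helps here: a query node of $T'$ that becomes a leaf of $T$ is allowed (an inconclusive query chain), and a {\sc Done} node of $T'$ whose children were pruned is allowed (a {\sc Done} node need not have children). So $T$ is a correct policy tree with structure $S$, and therefore $c(S, P, C) \le c(T)$.

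For the cost comparison, recall {\sc Done} nodes contribute $0$ and each query node contributes $1$ weighted by its (non-negative) probability of being reached. Passing from $T'$ to $T$ only deletes nodes, so $c(T) \le c(T')$, giving $c(S, P, C) \le c(T) \le c(T') = c(S', P, C)$. I do not expect a serious obstacle; the one point that needs care is the correctness check — making sure ``subtree'' is read as containing the common root, so that the revealed-edge set at each surviving position is genuinely unchanged, and explicitly invoking the definition's allowances for inconclusive leaves and childless {\sc Done} nodes so that the truncated object still qualifies as a correct policy tree (and, in passing, noting that existence of $T'$ also guarantees $c(S,P,C)$ is well defined).
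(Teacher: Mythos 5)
Your proposal is correct and follows essentially the same route as the paper: the paper's (one-line) proof also imposes $T(S',P,C)$ onto the smaller structure $S$ and observes that the result is still correct and no more expensive. Your write-up simply makes explicit the two facts the paper leaves implicit — that truncation preserves the revealed-edge set at every surviving position (so {\sc Done} claims stay valid and inconclusive leaves are permitted by the definition of correctness), and that deleting query nodes can only decrease the cost.
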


\begin{proof}
$T(S',P,C)$ can be imposed onto $S$. The resulting tree is still correct and never more expensive.
\end{proof}

We will discuss how to calculate $T(S,P,C)$ and $c(S,P,C)$
(i.e., how to optimally fill in a given tree structure)
toward the end of this section. Assuming access to $T(S,P,C)$ and $c(S,P,C)$, Proposition~\ref{prop:cheaper2} and~\ref{prop:cheaper3} combined lead to the following iterative algorithm for designing the optimal tree structure:

\begin{enumerate}

    \item We initialize an arbitrary tree structure $S$. In experiments,
        we use a complete binary tree with $4$ layers.

    \item $P$ and $C$ are given by the algorithm's outer loop. We calculate $c(S,P,C)$, which becomes the best lower bound on $c(P,C)$ so far.

    \item Go over all the leaf nodes of $T(S,P,C)$.
        If there is a leaf node that is not conclusive (not {\sc Done} and not  reaching the query limit), then we expand this node by attaching two child nodes to the tree structure. That is, in the next iteration, we need to decide what queries to place into these two new slots. The cost for the next iteration never decreases according to Proposition~\ref{prop:cheaper3}.

    \item
    If the tree was expanded, then go back to step $2$ and repeat.
        If no new nodes were added, then that means $T(S,P,C)$
        involves no partial decision. Its cost $c(S,P,C)$ must be at least $c(P,C)$ as $c(P,C)$
        is supposed to be the minimum cost for disproving either $P$ or $C$ (or reach the query limit). This combined with Proposition~\ref{prop:cheaper2} imply
        that $c(S,P,C)=c(P,C)$.
\end{enumerate}

Earlier, we mentioned our algorithm's running details
on a graph called {\sc euroroad} with $B=10$.
For this instance, under our exact algorithm, the final tree structure size is
$91$.
On the
contrary, if we do not use the above ``iterative tree growth'' idea and simply work on a complete
tree, then the number of slots is $2^{10}-1=1023$.
Our approach managed to significantly
reduce the search space.

In the above algorithm description, we separated the ``outer'' and ``inner'' loops.
This is purely for cleaner presentation. In implementation, essentially what we do is to iteratively generate the best policy tree so far given the current path set $P$, the current cut set $C$ and the current tree structure $S$.
If the resulting policy tree makes any wrong claims (claiming {\sc Done} prematurely), then we
expand the path/cut set. If the tree goes into any inconclusive situation (reaching
the leave position and still cannot claim {\sc Done}), then
we expand the tree structure. During this iterative process, we keep getting equal or higher cost based on Proposition~\ref{prop:cheaper} and \ref{prop:cheaper3}. When the process converges, which is {\em theoretically} guaranteed ({\em not practically}), we have the optimal query policy.
The exact algorithm can be interrupted anytime and the intermediate results serve as performance lower bounds.


\subsection{Optimal Correct Tree via Integer Program}

We now describe the last subroutine. Given a path set $P$, a cut set $C$ and a tree structure $S$, we aim to build the minimum cost correct policy tree $T(S,P,C)$.

Our technique is inspired by existing works on building optimal decision trees
for machine learning classification tasks using integer
programming~\cite{Bertsimas17:Optimal,Verwer19:Learning}.
The works on optimal decision trees studied how to select the features to test
at the tree nodes.
The learning samples are routed down the tree based on the selected features
and the feature test results. The objective is to maximize the
overall accuracy at the leaf nodes.  We are performing
a very similar task.
The objective is no longer about learning accuracy but on minimizing tree cost in the context
of our model. Our model is as follows.

We first present the {\bf variables}. Let $E^R$ be the set of edges referenced in either $P$ or $C$.
For every tree node $i\in S$,
for every edge $e\in E^R$,
we define a binary variable $v_{e,i}$. If $v_{e,i}$ is $1$, then it means
we will perform query $e$ at node $i$.
We introduce another variable $v_{\textsc{Done},i}$, which is $1$ if and only if
it is correct to claim {\sc Done} at node $i$.

Next, we present {\bf constraints}.

Every node is either a query node or {\sc Done}, which implies
$\forall i, \sum_{e\in E^R}v_{e,i}+v_{\textsc{Done},i} =1$.

If a node is {\sc Done}, then all its descendants should be {\sc Done}, which implies
$\forall i, v_{\textsc{Done},i}\ge v_{\textsc{Done},\textsc{Parent}(i)}$.

Along the route from root to any leaf, each edge should be queried at most once.
Let $\textsc{route}(i)$ be the set of nodes along the route from root to node $i$ (inclusive). We have
$\forall i\in \textsc{Leaves},\forall e\in E^R, \sum_{j\in \textsc{route}(i)} v_{e,j}\le 1$.

{\sc Done} claims must be correct.
Let $\textsc{lnodes}$ be the set of tree nodes who are left children of their parents.
If we claim {\sc Done} at node $i\in\textsc{lnodes}$ and $\textsc{Parent}(i)$ is not already {\sc Done}, then we must have disproved all cuts in $C$ once reaching $i$, since $\textsc{Parent}(i)$'s query result
is {\sc On}. To verify that we have disproved all cuts in $C$, we only need to consider the {\sc On} edges confirmed once reaching $i$.
The queries that resulted in {\sc On} edges are queried at the following nodes
$N(i)=\{\textsc{Parent}(j)|j \in \textsc{lnodes}\cap \textsc{route}(i)\}$.
Given a $\textsc{cut}$ (interpreted as a set of edges), to disprove it, all we need is that at least one edge in the cut has been queried and has returned an {\sc On} result, which is $\sum_{j\in N(i),e\in\textsc{cut}}v_{e,j} \ge 1$.

We only need to verify that {\sc Done} is correctly claimed when it is claimed
for the first time, which would be for nodes who satisfy that $v_{\textsc{Done},i}-v_{\textsc{Done},\textsc{Parent}(i)}=1$.
In combination, for node $i\in\textsc{lnodes}$, we have the following constraint: $\forall\textsc{cut}\in C, \sum_{j\in N(i),e\in\textsc{cut}}v_{e,j}-(v_{\textsc{Done},i}-v_{\textsc{Done},\textsc{Parent}(i)}) \ge 0$.
Note that
$v_{\textsc{Done},i}-v_{\textsc{Done},\textsc{Parent}(i)}$ is $0$ when we
are not dealing with a node that claims {\sc Done} for the first time, which
would disable the above constraint as it is automatically satisfied.

The above constraints only referenced cuts. We construct constraints in a similar manner for disproving paths.

The {\bf objective} is to minimize the tree cost. The cost of node $i$ is simply
$\sum_{e\in E^R}v_{e,i}$. The probability of reaching a node only depends
on the input tree structure so it is a constant.
We use $\textsc{Prob}(i)$ to denote the probability
of reaching node $i$. For example, if it takes $3$ left turns ($3$ {\sc On} results) and $2$ right turns (2 {\sc Off} results) to reach $i$, then
$\textsc{Prob}(i)=p^3(1-p)^2$.

In the appendix,
we provide the complete integer program, the overall algorithm's pseudocode and a detailed walk through of our algorithm on an example instance.


\begin{table*}[t!]\centering

\ra{1}

\footnotesize

\begin{tabular}{@{}rrrcrrcrrrrrr@{}}\toprule

& \multicolumn{2}{c}{Size} & \phantom{abc} & \multicolumn{2}{c}{$B=5$} & \phantom{abc} & \multicolumn{5}{c}{$B=10$}\\

\cmidrule{2-3} \cmidrule{5-6} \cmidrule{8-12}

    & $n$ & $m$ & & Exact & Other & & Exact/LB & Tree & RL & MCTS & Other \\

\midrule

\addlinespace[.01in]\multicolumn{10}{l}{{\em Road networks - Undirected}}\\\addlinespace[.03in]
\multicolumn{1}{l}{\sc{USAir97}} &333 &2126 && ${\bf3.813}$ & ${\bf3.813}$ && 4.555 & ${\bf4.986}$ & 5.029 & 4.992 & 4.994 \\
\multicolumn{1}{l}{\sc{euroroad}} &1175 &1417 && ${\bf1.938}$ & ${\bf1.938}$ && ${\bf2.109}$ & ${\bf2.109}$ & ${\bf2.109}$ & ${\bf2.109}$ & ${\bf2.109}$ \\
\multicolumn{1}{l}{\sc{minnesota}} &2643 &3303 && ${\bf3.938}$ & ${\bf3.938}$ && 4.785 & 5.506 & 5.525 & 5.506 & ${\bf\underline{5.477}}$ \\
\addlinespace[.04in]\multicolumn{10}{l}{{\em Power networks - Undirected}}\\\addlinespace[.03in]
\multicolumn{1}{l}{\sc{power}} &4942 &6594 && ${\bf3.625}$ & ${\bf3.625}$ && 4.367 & ${\bf4.615}$ & 4.617 & ${\bf4.615}$ & ${\bf4.615}$ \\
\multicolumn{1}{l}{\sc{bcspwr01}} &40 &85 && ${\bf3.125}$ & ${\bf3.125}$ && ${\bf3.541}$ & ${\bf3.541}$ & 3.547 & ${\bf3.541}$ & ${\bf3.541}$ \\
\multicolumn{1}{l}{\sc{bcspwr02}} &50 &108 && ${\bf2.625}$ & ${\bf2.625}$ && ${\bf3.178}$ & ${\bf3.178}$ & 3.207 & 3.229 & ${\bf3.178}$ \\
\multicolumn{1}{l}{\sc{bcspwr03}} &119 &297 && ${\bf3.938}$ & ${\bf3.938}$ && 4.883 & ${\bf5.816}$ & 5.895 & 5.887 & 5.912 \\
\multicolumn{1}{l}{\sc{bcspwr04}} &275 &943 && ${\bf3.938}$ & ${\bf3.938}$ && 5.188 & ${\bf6.074}$ & 6.096 & 6.086 & 6.080 \\
\multicolumn{1}{l}{\sc{bcspwr05}} &444 &1033 && ${\bf2.938}$ & ${\bf2.938}$ && 3.555 & 4.551 & 4.543 & 4.561 & ${\bf4.539}$ \\
\multicolumn{1}{l}{\sc{bcspwr06}} &1455 &3377 && ${\bf2.625}$ & ${\bf2.625}$ && 3.164 & ${\bf3.197}$ & 3.211 & 3.246 & 3.201 \\
\multicolumn{1}{l}{\sc{bcspwr07}} &1613 &3718 && ${\bf2.625}$ & 2.689 && 3.346 & ${\bf3.461}$ & 3.500 & 3.564 & 3.625 \\
\multicolumn{1}{l}{\sc{bcspwr08}} &1625 &3837 && ${\bf4.188}$ & ${\bf4.188}$ && 4.945 & ${\bf5.996}$ & 6.174 & 6.266 & 6.414 \\
\multicolumn{1}{l}{\sc{bcspwr09}} &1724 &4117 && ${\bf3.938}$ & ${\bf3.938}$ && 4.664 & ${\bf5.891}$ & 5.895 & 5.924 & ${\bf5.891}$ \\
\multicolumn{1}{l}{\sc{bcspwr10}} &5301 &13571 && ${\bf4.625}$ & ${\bf4.625}$ && 5.270 & 7.428 & 7.393 & ${\bf7.350}$ & 7.553 \\
\addlinespace[.04in]\multicolumn{10}{l}{{\em Miscellaneous small graphs - Directed except for Chesapeake}}\\\addlinespace[.03in]
\multicolumn{1}{l}{\sc{bugfix}} &13 &28 && ${\bf2.438}$ & ${\bf2.438}$ && ${\bf2.715}$ & ${\bf2.715}$ & ${\bf2.715}$ & ${\bf2.715}$ & ${\bf2.715}$ \\
\multicolumn{1}{l}{\sc{football}} &36 &118 && ${\bf1.750}$ & ${\bf1.750}$ && ${\bf1.750}$ & ${\bf1.750}$ & ${\bf1.750}$ & ${\bf1.750}$ & ${\bf1.750}$ \\
\multicolumn{1}{l}{\sc{chesapeake}} &40 &170 && ${\bf3.813}$ & ${\bf3.813}$ && 4.594 & ${\bf5.090}$ & 5.152 & 5.199 & 5.113 \\
\multicolumn{1}{l}{\sc{cattle}} &29 &217 && ${\bf2.875}$ & ${\bf2.875}$ && 3.721 & ${\bf4.131}$ & 4.166 & 4.158 & \underline{4.188} \\
\addlinespace[.04in]\multicolumn{10}{l}{{\em Python dependency graphs - Directed}}\\\addlinespace[.03in]
\multicolumn{1}{l}{\sc{networkx}} &423 &908 && ${\bf2.000}$ & ${\bf2.000}$ && ${\bf2.078}$ & ${\bf2.078}$ & ${\bf2.078}$ & ${\bf2.078}$ & ${\bf2.078}$ \\
\multicolumn{1}{l}{\sc{numpy}} &429 &1208 && ${\bf2.688}$ & ${\bf2.688}$ && 3.418 & ${\bf3.688}$ & 3.717 & 3.709 & 3.719 \\
\multicolumn{1}{l}{\sc{matplotlib}} &282 &1484 && ${\bf2.188}$ & ${\bf2.188}$ && ${\bf2.520}$ & ${\bf2.520}$ & ${\bf2.520}$ & 2.539 & \underline{2.537} \\
\addlinespace[.04in]\multicolumn{10}{l}{{\em Active Directory graphs - Directed}}\\\addlinespace[.03in]
\multicolumn{1}{l}{\sc{r2000}} &5997 &18795 && ${\bf1.938}$ & ${\bf1.938}$ && ${\bf1.938}$ & ${\bf1.938}$ & ${\bf1.938}$ & ${\bf1.938}$ & ${\bf1.938}$ \\
\multicolumn{1}{l}{\sc{r4000}} &12001 &45781 && ${\bf2.688}$ & ${\bf2.688}$ && 3.268 & ${\bf3.344}$ & 3.352 & ${\bf3.344}$ & ${\bf3.344}$ \\
\multicolumn{1}{l}{\sc{ads5}} &1523 &5359 && ${\bf2.688}$ & ${\bf2.688}$ && 3.287 & ${\bf3.391}$ & 3.402 & ${\bf3.391}$ & ${\bf3.391}$ \\
\multicolumn{1}{l}{\sc{ads10}} &3015 &12776 && ${\bf3.375}$ & ${\bf3.375}$ && 3.957 & ${\bf4.123}$ & ${\bf4.123}$ & ${\bf4.123}$ & ${\bf4.123}$ \\
\bottomrule

\end{tabular}

\caption{Expected query count under different algorithms: {\sc Exact}=Exact algorithm; {\sc Exact/LB}=Either exact (bold) or lower bound; {\sc Other}=Best out of 8 heuristics from literature (those underlined are {\bf not} achieved by {\sc H1}); {\sc Tree}=Heuristic based on the exact algorithm; {\sc RL} and {\sc MCTS} are self-explanatory}

\label{tablebig}

\end{table*}

\section{Heuristics}
There are a long list of existing heuristics and approximation
algorithms from literature on sequential testing, learning with attribute costs and stochastic
Boolean function evaluation.
\cite{Unluyurt04:Sequential}'s survey mentioned
heuristics from~\cite{Jedrzejowicz83:Minimizing,Cox89:Heuristic,Sun96:Hill}.
Approximations algorithms were proposed in~\cite{Allen17:Evaluation,Deshpande14:Approximation,Kaplan05:Learning,Golovin11:Adaptive}.
We {\em select} and {\em port} $8$ heuristics/approximation algorithms from literature for comparison.
In the appendix, we describe these algorithms and provide details on the necessary modifications we applied
to fit our model.
Below, we present one heuristic, referred to as {\sc H1}. {\sc H1} is {\em impressively elegant} and experimentally it is the {\em dominantly better} existing heuristic.
The definition is worded in the context of our graph
model.



\begin{definition}[{\sc H1}~\cite{Jedrzejowicz83:Minimizing}, also independently discovered in \cite{Cox89:Heuristic}]
    We find the path with the minimum number of hidden edges from $s$ to $t$.
    The path is not allowed to contain {\sc Off} edges.
    We also find the cut with the minimum number of hidden edges between $s$ to $t$.
    The cut is not allowed to contain {\sc On} edges.
    The path and the cut must intersect and the intersection must be a hidden edge. We query this edge.
    \label{def:h1}
\end{definition}


Below we present three new heuristics for our model.
All hyper-parameters are deferred to the appendix.

\noindent {\bf Heuristic based on the exact algorithm:}
    We pretend that there are only $B'$ queries left ($B'<B$) and apply the exact algorithm.
   $B'$ needs to be small enough so that the exact algorithm is scalable.
    Once we have the optimal policy tree, we query the edge specified in the tree root, and then re-generate the optimal policy tree for the updated graph, still pretending that there are only $B'$ queries left.
We refer to this heuristic as {\sc Tree}. 


\noindent {\bf Reinforcement learning:}
\cite{Yu23:Deep} applied reinforcement learning to sequential medical tests involving
$6$ tests. We cannot directly apply reinforcement learning to our model, as we
have a lot more than $6$ queries to arrange.
For our case, the action space is huge, as large graphs contain tens of thousands of edges.
The key of our algorithm is a heuristic for limiting the action space.
We use {\sc H1} to heuristically generate a small set
of top priority edges to select from. Specifically, we generate {\sc H1}'s policy
tree assuming a query limit of $B'$. All edges referenced form the action space.
The action space size is at most $2^{B'}-1$.

It is also not scalable to take the whole graph as {\em observation}.
Fortunately, {\em the query limit comes to assist}.  Our observation contains $B-1$
segments as there are at most $B-1$ past queries. Each segment contains $3+(2^{B'}-1)$ bits. $3$ bits are one hot encoding of query result (not queried yet, {\sc On}, {\sc Off}). $2^{B'}-1$
bits encode the action.

Since the action space always contains {\sc H1}'s default action, in experiments,
when we apply Proximal Policy Optimization (PPO)~\cite{Schulman17:Proximal}, the agent ended up cloning
{\sc H1} and we cannot derive better/new heuristics.
We instead apply Soft Actor-Critic for Discrete Action~\cite{Christodoulou19:Soft}, as it
involves entropy maximisation, which encourages the agent to deviate from {\sc H1}.
We managed to obtain better policy than {\sc H1} on many occasions.

\noindent {\bf Monte Carlo tree search:}
We use the same heuristic to limit the action space (same as {\sc RL}).
We apply standard epsilon-greedy Monte Carlo tree search.

\section{Experiments}

We experiment on a wide range of practical graphs, including
road and power networks~\cite{Davis11:University,konect,Rossi15:Network,Watts98:Collective,bcspwr}, Python package dependency graphs~\cite{pydeps}, Microsoft Active Directory attack graphs~\cite{dbcreator,adsimulator,Guo22:Practical,Goel22:Defending,Guo23:Scalable,Goel23:Evolving} and more~\cite{Allard19:Dataset}. For all experiments, we set $p=0.5$. Sources and destinations are selected
randomly. Please refer to the appendix for reproducibility nodes and also the graph descriptions.

Our results are summarized in Table~\ref{tablebig}.
{\em When $B=5$, our exact algorithm scales for all $25$ graphs.}
The column {\sc Other} shows the best performance among $8$ heuristics
from literature.
Besides {\sc H1} from Definition~\ref{def:h1},
the other $7$ heuristics are described in the appendix.
For $B=5$, {\sc H1} performs the best among all existing heuristics for every single graph.
Actually, {\sc H1} produces
the optimal results (confirmed by our exact algorithm) in all but one graph ({\sc bcspwr07}).

{\em When $B=10$, our exact algorithm produces the optimal policy for $8$ out of $25$ graphs} (indicated by bold numbers in column {\sc Exact/LB}). For the remaining graphs, the exact algorithm only produces a lower bound (we set a time out of $72$ hours). {\em The achieved lower bounds have high quality} (i.e., close to the achievable results).

When $B=10$, we compare our best-performing heuristic {\sc Tree} against the
best existing heuristic (best among $8$).  {\sc Tree} wins for $11$ graphs,
loses for $2$ graphs, and ties for $12$ graphs.  It should be noted that in the
above, we are comparing {\sc Tree} against the {\bf best} existing heuristic
(best among $8$). If we are comparing against any {\bf individual} heuristic,
then our ``winning rate'' would be even higher.


{\sc RL} and {\sc MCTS} perform worse than {\sc Tree}.
Nevertheless, if we take the better between {\sc RL} and {\sc MCTS}, the results outperform the existing heuristics (best among $8$).
We suspect that the ``long-tail'' pattern of the query costs causes difficulty for both {\sc RL} and {\sc MCTS}. The real advantage of {\sc RL} and {\sc MCTS} is that they are model-free.
In the appendix, we discuss model generalisation involving interdependent edges,
which can easily be handled by {\sc RL} and {\sc MCTS}.




{\em Lastly, our exact algorithm can find the optimal solution for
{\sc bugfix}, {\sc r2000}, {\sc football} and {\sc networkx} even without query limit}.
Among them, {\sc r2000} is the largest with $18795$ edges.

To summarize, {\bf our exact algorithm can handle significantly larger
graphs} compared to experiments from existing literature using at
most $20$ edges~\cite{Bendov81:Branch,Reinwald66:Conversion,Breitbart75:Branch}.
Even when our exact algorithm does not terminate, it produces
a {\bf high-quality lower bound}. Our heuristics, especially {\sc Tree},
{\bf outperforms all existing heuristics} from literature.


\newpage
\bibliographystyle{abbrv}
\bibliography{/home/mingyu/Dropbox/nixos/extra,/home/mingyu/Dropbox/nixos/mg,/home/mingyu/Dropbox/nixos/mggame,/home/mingyu/Dropbox/nixos/mingyu_dblp,extra}

\newpage

\section{Acknowledgments}
Frank Neumann has been supported by the Australian Research Council (ARC) through grant FT200100536.

\section*{Appendix}

\subsection{Experiments for $B=20$}

Table~\ref{tablebig20} shows the experimental results for $B=20$.
We compare our heuristics (best among {\sc Tree}, {\sc RL} and {\sc MCTS}) against the existing heuristics (best among $8$ from literature).
There are $25$ graphs in total.
Our heuristics come out on top for $15$ graphs, loses for $6$ graphs and ties for $4$ graphs.
If we only compare our best-performing heuristic {\sc Tree} against the best existing heuristic,
then {\sc Tree} wins for $13$ graphs, loses for $6$ graphs, and ties for $6$ graphs.

\begin{table*}[t!]\centering

\ra{1}

\footnotesize

\begin{tabular}{@{}rrrcrrrr@{}}\toprule

& \multicolumn{2}{c}{Size} & \phantom{abc} & \multicolumn{4}{c}{$B=20$}\\

\cmidrule{2-3} \cmidrule{5-8}

    & $n$ & $m$ & & Tree & RL & MCTS & Other \\

\midrule

\multicolumn{7}{l}{{\em Road networks - Undirected}}\\
\multicolumn{1}{l}{\sc{USAir97}} &333 &2126 && ${\bf5.484}$ & 5.660 & 5.525 & 5.591 \\
\multicolumn{1}{l}{\sc{euroroad}} &1175 &1417 && ${\bf2.121}$ & ${\bf2.121}$ & ${\bf2.121}$ & ${\bf2.121}$ \\
\multicolumn{1}{l}{\sc{minnesota}} &2643 &3303 && ${\bf6.504}$ & 6.677 & ${\bf6.504}$ & 6.518 \\
\multicolumn{7}{l}{{\em Power networks - Undirected}}\\
\multicolumn{1}{l}{\sc{power}} &4942 &6594 && 5.118 & 5.134 & 5.210 & ${\bf5.111}$ \\
\multicolumn{1}{l}{\sc{bcspwr01}} &40 &85 && 3.627 & 3.627 & 3.627 & ${\bf\underline{3.625}}$ \\
\multicolumn{1}{l}{\sc{bcspwr02}} &50 &108 && 3.149 & 3.156 & ${\bf3.130}$ & 3.149 \\
\multicolumn{1}{l}{\sc{bcspwr03}} &119 &297 && ${\bf7.400}$ & 7.773 & 7.833 & \underline{7.748} \\
\multicolumn{1}{l}{\sc{bcspwr04}} &275 &943 && ${\bf7.786}$ & 7.993 & 7.927 & 7.853 \\
\multicolumn{1}{l}{\sc{bcspwr05}} &444 &1033 && 6.513 & 6.706 & 6.568 & ${\bf6.441}$ \\
\multicolumn{1}{l}{\sc{bcspwr06}} &1455 &3377 && 3.277 & 3.261 & 3.399 & ${\bf3.260}$ \\
\multicolumn{1}{l}{\sc{bcspwr07}} &1613 &3718 && 3.908 & 4.044 & ${\bf3.864}$ & 4.020 \\
\multicolumn{1}{l}{\sc{bcspwr08}} &1625 &3837 && ${\bf7.253}$ & 8.038 & 7.448 & 7.675 \\
\multicolumn{1}{l}{\sc{bcspwr09}} &1724 &4117 && 8.284 & 8.606 & 9.018 & ${\bf8.203}$ \\
\multicolumn{1}{l}{\sc{bcspwr10}} &5301 &13571 && 10.672 & 11.189 & 12.066 & ${\bf10.666}$ \\
\multicolumn{7}{l}{{\em Miscellaneous small graphs - Directed except for Chesapeake}}\\
\multicolumn{1}{l}{\sc{bugfix}} &13 &28 && 2.708 & ${\bf2.699}$ & 2.711 & \underline{2.708} \\
\multicolumn{1}{l}{\sc{football}} &36 &118 && ${\bf1.777}$ & ${\bf1.777}$ & ${\bf1.777}$ & ${\bf1.777}$ \\
\multicolumn{1}{l}{\sc{chesapeake}} &40 &170 && ${\bf5.507}$ & 5.727 & 5.574 & 5.615 \\
\multicolumn{1}{l}{\sc{cattle}} &29 &217 && ${\bf4.507}$ & 4.619 & 4.544 & 4.604 \\
\multicolumn{7}{l}{{\em Python dependency graphs - Directed}}\\
\multicolumn{1}{l}{\sc{networkx}} &423 &908 && ${\bf2.058}$ & ${\bf2.058}$ & ${\bf2.058}$ & ${\bf2.058}$ \\
\multicolumn{1}{l}{\sc{numpy}} &429 &1208 && ${\bf4.239}$ & 4.339 & 4.310 & 4.311 \\
\multicolumn{1}{l}{\sc{matplotlib}} &282 &1484 && ${\bf2.528}$ & 2.545 & 2.546 & \underline{2.589} \\
\multicolumn{7}{l}{{\em Active Directory graphs - Directed}}\\
\multicolumn{1}{l}{\sc{r2000}} &5997 &18795 && ${\bf1.924}$ & ${\bf1.924}$ & ${\bf1.924}$ & ${\bf1.924}$ \\
\multicolumn{1}{l}{\sc{r4000}} &12001 &45781 && ${\bf3.967}$ & 4.057 & 4.003 & 3.970 \\
\multicolumn{1}{l}{\sc{ads5}} &1523 &5359 && ${\bf3.499}$ & 3.510 & 3.502 & 3.510 \\
\multicolumn{1}{l}{\sc{ads10}} &3015 &12776 && ${\bf4.629}$ & 4.876 & 4.678 & 4.658 \\
\bottomrule

\end{tabular}

\caption{Expected query count under different algorithms: {\sc Other}=Best out of 8 heuristics from literature (those underlined are {\bf not} achieved by {\sc H1}); {\sc Tree}=Heuristic based on the exact algorithm; {\sc RL} and {\sc MCTS} are self-explanatory}

\label{tablebig20}

\end{table*}

\subsection*{Graph description}

In this section, we include the sources of the graphs used in our experiments.

\begin{itemize}

    \item {\sc USAir97}~\cite{Davis11:University}: An edge
        represents that there are direct flights between two airports.

    \item {\sc euroroad}~\cite{konect} and {\sc minnesota}~\cite{Davis11:University}:
        A mostly Europe road network and Minnesota's road network.

    \item {\sc Power}~\cite{Rossi15:Network,Watts98:Collective}: Western States
        Power Grid of the US.

    \item {\sc bcspwr} series~\cite{bcspwr}:
        Power network patterns collected by Boeing Computer Services.

    \item {\sc bugfix}~\cite{Allard19:Dataset}: A directed graph describing
        the workflow of bugfix.

    \item {\sc football}~\cite{Davis11:University}: Nodes are countries. Directed
        edges represent export of football players between countries.

    \item {\sc chesapeake}~\cite{konect}: Nodes are organism types in Chesapeake, US.
        Edges denote carbon exchange.

    \item {\sc cattle}~\cite{konect}: Directed edges represent dominance behaviour observed between cattle in Iberia Livestock Experiment Station in Louisiana, US.

    \item Python package dependency graphs~\cite{pydeps}:
        We generate Python package dependency graphs for popular Python packages using an open source tool {\sc pydeps}.

    \item Active Directory attack
        graphs~\cite{dbcreator,adsimulator,Guo22:Practical,Goel22:Defending,Guo23:Scalable}: These
        are graphs generated using two different open source synthetic Active
        Directory graph generators ({\sc DBCreator}~\cite{dbcreator} and {\sc adsimulator}~\cite{adsimulator}).
        The selection of parameters and preprocessing steps follow
        \cite{Guo22:Practical,Goel22:Defending,Guo23:Scalable}.

\end{itemize}

We conclude with two histograms Figure~\ref{fig:euroroadhist} and Figure~\ref{fig:r4000hist}. We set $p=0.5$ and use the same sources and destinations from the other experiments. Please refer to the reproducibility notes on how sources and destinations are selected. We run {\sc H1} on {\sc euroroad} (undirected) and {\sc r4000} (directed) {\em without query limit} for $1000$ episodes each. We show the query count distribution of these two graphs.
Both exhibit clear {\em heavy-tail} patterns.
For
{\sc euroroad}, most episodes end in less than $10$ queries and the vast majority
actually end within $5$ queries.
For {\sc r4000}, even though some episodes can go as long as $50$ queries, still, most of the episodes end within $20$ queries and the vast majority end within
$10$ queries.

\begin{figure}[h]
    \includegraphics[width=0.5\linewidth]{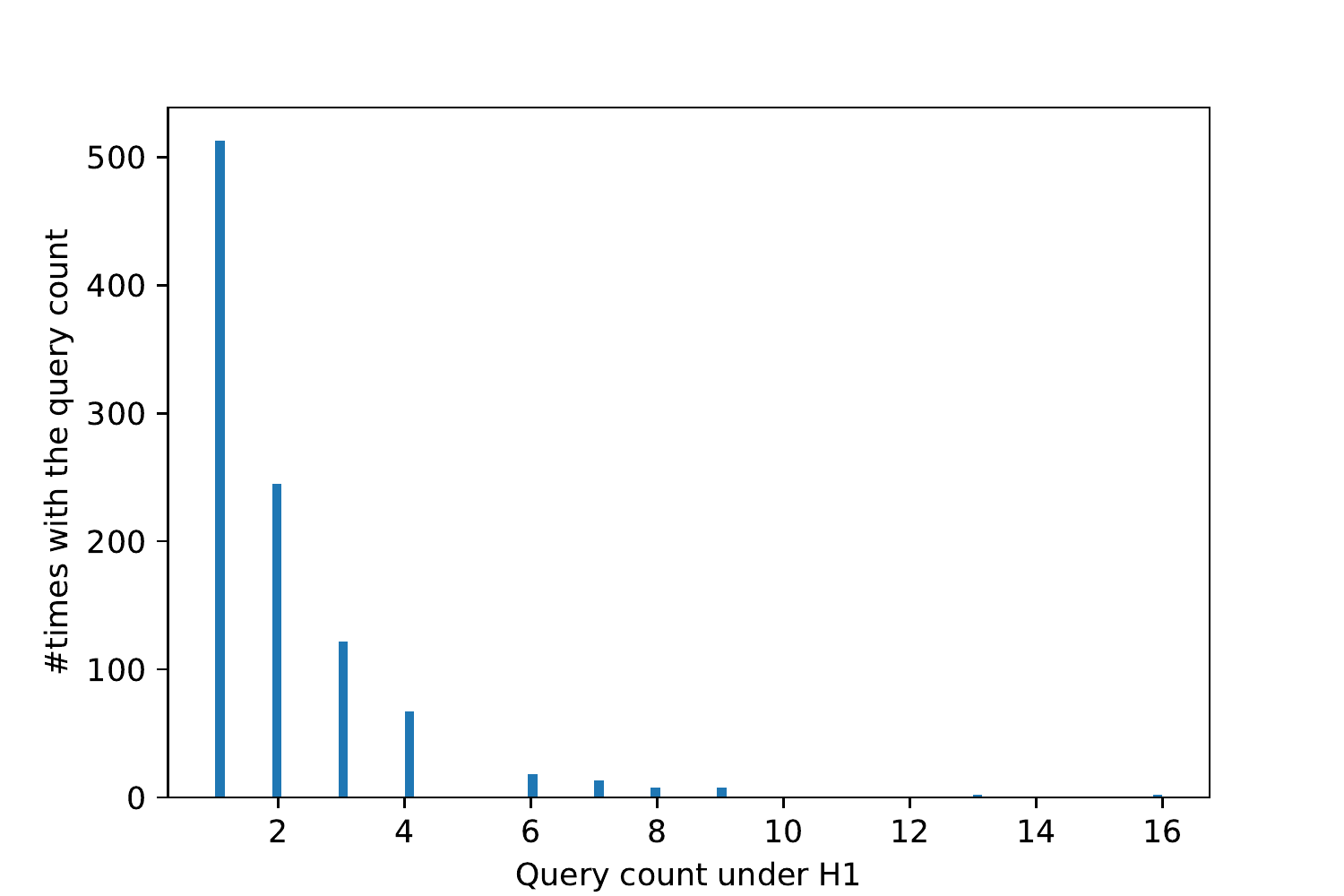}
\centering
    \caption{{\sc H1} on {\sc euroroad} without query limit ($1000$ episodes)}
    \label{fig:euroroadhist}
\end{figure}

\begin{figure}[h]
    \includegraphics[width=0.5\linewidth]{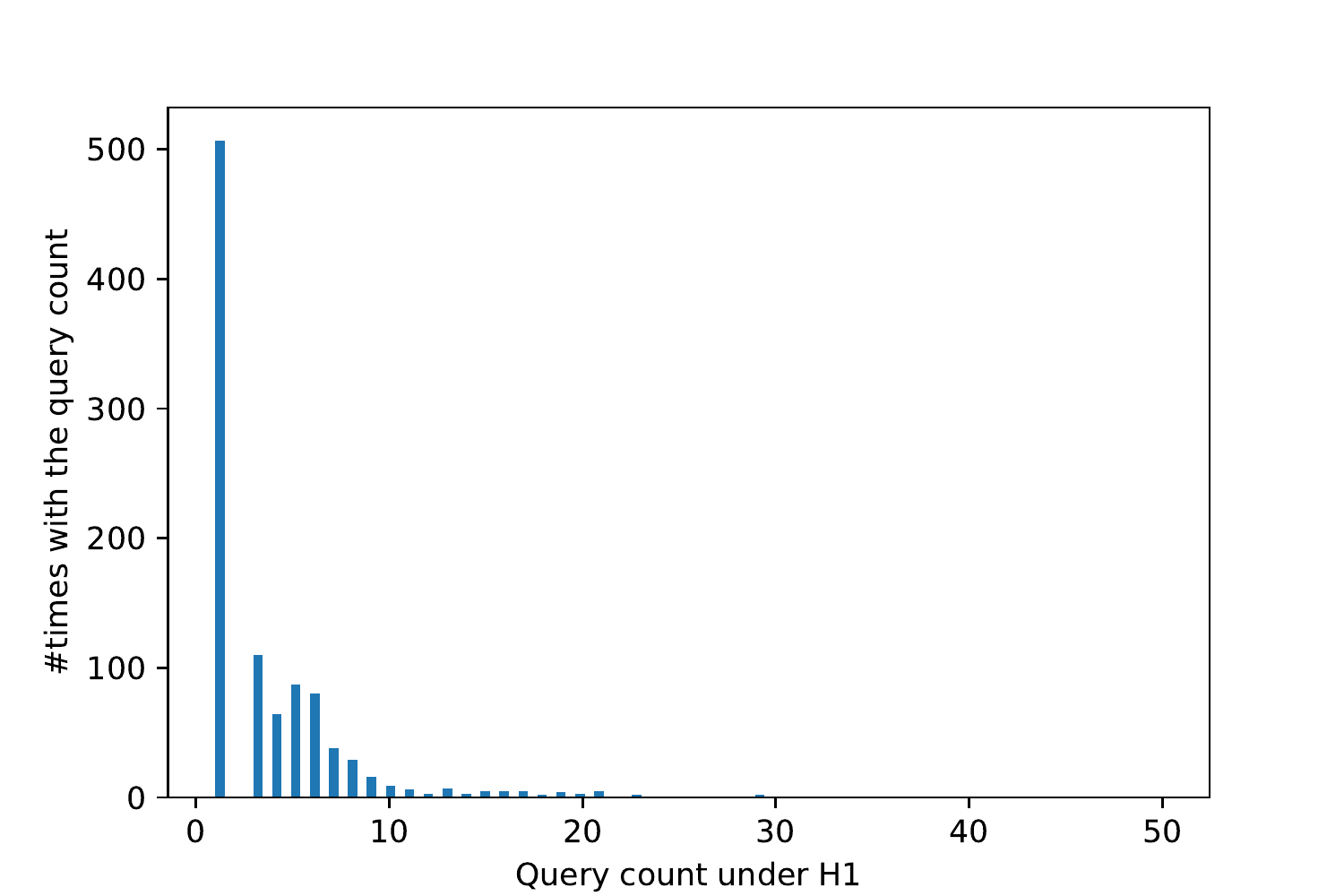}
\centering
    \caption{{\sc H1} on {\sc r4000} without query limit ($1000$ episodes)}
    \label{fig:r4000hist}
\end{figure}

\subsection*{Heuristics from literature}

There are a long list of relevant heuristics and approximation
algorithms on sequential testing, learning with attribute costs and stochastic Boolean
function evaluation.
\cite{Unluyurt04:Sequential}'s survey mentioned
heuristics from~\cite{Jedrzejowicz83:Minimizing,Cox89:Heuristic,Sun96:Hill}.
Approximations algorithms were proposed in~\cite{Allen17:Evaluation,Deshpande14:Approximation,Kaplan05:Learning,Golovin11:Adaptive}.
We {\em select} and {\em port} $8$ heuristics from literature and used them in our experimental comparison.

\vspace{.1in}
\noindent{\em On selection:} Some existing heuristics do not apply to our model (i.e., the greedy heuristic
that queries the cheapest bit --- our model assumes unit cost). Some heuristics
are too expensive for large graphs.  For example, one heuristic from
\cite{Jedrzejowicz83:Minimizing} requires computing the \#P-hard graph
reliability problem for {\em every} edge for {\em every} query, which
is not feasible when there are tens of thousands of edges.

\vspace{.1in}
\noindent{\em On porting:}
When we implement a heuristic/an approximation algorithm from existing
literature, we took the liberty and made necessary modifications. For example, the
approximation algorithm in~\cite{Deshpande14:Approximation} involves counting
the number of paths/cuts left for $m$ times to make one query decision,
which is computationally infeasible. Instead, we heuristically
generate a sample set of paths/cuts and only count how many from
our sample sets are left.
As another example, the approximation algorithm in~\cite{Allen17:Evaluation}
would query paths/cuts in one go (continue querying even after a path/cut
is eliminated). This is fine for an approximation algorithm, but to use it as a heuristic,
we switch to not continuing querying eliminated paths/cuts.
\vspace{.1in}

Besides {\sc
H1} (Definition~\ref{def:h1}~\cite{Jedrzejowicz83:Minimizing,Cox89:Heuristic}),
we also implemented the following $7$ heuristics.

\begin{itemize}

    \item {\sc H2-Both}, {\sc H2-Path} and {\sc H2-Cut}, inspired by \cite{Kaplan05:Learning}: \cite{Kaplan05:Learning}
        proposed an approximation algorithm for learning with attribute costs.  The idea is to alternate between
        two greedy algorithms.  In the context of our model, one greedy
        algorithm greedily queries the edge that appears in the most paths.
        The second greedy algorithm greedily queries the edge that appears in
        the most cuts.  Basically, one greedy algorithm is banking on that a
        path exists.  Similarly, the other one expects that a cut exits.  One
        of the two algorithms is doing wasteful work but that is fine for
        an approximation algorithm (as it only increases the cost by
        a constant factor of $2$).

        We implemented the following three heuristics:

        {\sc H2-Both} greedily queries the edge
        that appears in most certificates (either paths or cuts).

        {\sc H2-Path} greedily queries the edge
        that appears in most paths.

        {\sc H2-Cut} greedily queries the edge
        that appears in most cuts.

        Lastly, instead of enumerating all paths/cuts, we generate
        a sample set of paths and cuts. Greedy selection is conducted
        based on the samples.

        Path and cut samples are generated as follows:

        \begin{itemize}

            \item We generate {\sc H1}'s query policy tree by setting a query limit of $10$. Since {\sc H1}'s decision is always the intersection of a path
                and  a cut, we collect all paths and cuts ever referenced
                by {\sc H1}'s policy tree. Presumably, these paths and cuts
                are more relevant.

            \item We also add the following paths and cuts into our samples.
                We start with the shortest path $p$.
                We disallow one edge on it, and obtain the alternative shortest path $p'$ . We continue this way. We disallow one edge on $p'$,
                and obtain another shortest path $p''$.
                We exhaustively generate all paths that can be obtained
                this way, starting from only disallowing one edge, to disallowing
                two edges, etc. until we collect $100$ paths.
                We do the same for cuts.

        \end{itemize}

        For all remaining heuristics, whenever they reference all paths or all
        cuts, in our implementation, we use the above samples instead.

    \item {\sc MinSC}, {\sc MinSC-Path}, {\sc MinSC-Cut}, inspired by \cite{Allen17:Evaluation}: \cite{Allen17:Evaluation} proposed an approximation algorithm for
        $k$-term DNF style SBFE. The overall algorithm also alternates between two algorithms.
        In the context of our model, one algorithm tries to approximately
        solve a minimum set cover for covering all paths (or covering all cuts) and tries to find the edges that appear in the minimum set cover.
        The other algorithm greedily selects the shortest path or the minimum cut
        and always query the path/cut in one go (i.e., query every edge along
        the shortest path/minimum cut and does not stop even though the
        path/cut is already eliminated), which is fine for an
        approximation algorithm.
        In our implementation, we made two changes. We stop querying a path/cut
        once it is eliminated.
        Also, we solve the minimum set cover problem optimally instead of
        approximately, as modern MIP solvers
        like Gurobi can solve small set cover instances sufficiently fast.

        We implemented the following three heuristics:

        We optimally solve the minimum set cover problem for covering all paths
        and for covering all cuts.  Edges in $E_P$ cover all paths and edges in
        $E_C$ cover all cuts.

        {\sc MinSC} queries an edge that is in the intersection of $E_P$ and $E_C$.

        {\sc MinSC-Path} queries the edge that is in the intersection
        of $E_P$ and the minimum cut.

        {\sc MinSC-Cut} queries the edge that is in the intersection
        of $E_C$ and the shortest path.

    \item {\sc Adaptive Submodular}~\cite{Deshpande14:Approximation,Golovin11:Adaptive,Fu17:Determining}: We query the edge that minimizes the product of ``number
        of paths remain if the query response is {\sc Off}'' and ``number of cuts remain if the query response is {\sc On}''.

\end{itemize}

Among $75$ experiments ($25$ graphs with $B\in\{5,10,20\}$),
{\sc H1} only lost to the other heuristics $7$ times.
\
\begin{itemize}

    \item {\sc MinSC-Path} performs the best for {\sc minnesota} and $B=10$.

    \item {\sc MinSC Cut} performs the best for {\sc bugfix} and $B=20$.

    \item
        {\sc MinSC} performs the best for two graphs when $B=10$ ({\sc cattle} and {\sc matplotlib}). It also performs the best for four graphs when $B=20$ ({\sc bcspwr01}, {\sc bcspwr03} and {\sc matplotlib}).

\end{itemize}

\subsection*{Reproducibility notes}

\noindent{\em On picking source and destination:} We use seed $0$ to $10$ ($11$ seeds in total). We randomly draw a source node $s$. Among all nodes reachable from the source,
        we randomly pick a destination node $t$.
        If $s$ cannot reach any other node, then we redraw.
        For Active Directory graphs, we always set the destination to Domain Admin (a particular node
        representing the highest privilege).
        The source node is then randomly drawn from those that have available attack paths to Domain Admin.

        Among the $11$ seeds, some result in trivial instances with small expected query count (i.e., the extreme cases only require $1$ query).
        Generally speaking, higher expected query count corresponds to higher computational demand (i.e., more paths and cuts that need to be considered
        and larger tree structure).
        To measure the ``median difficulty'' of a graph, we pick the
        ``median seed'' from $11$ seeds.
        Specifically, we run {\sc H1} with a query limit of $B=10$ on all $11$ seeds, and record $11$ expected query counts. ``Median seed'' is the seed that corresponds to the median of these $11$ data points (i.e., the seed that corresponds
        to ``median difficulty'' for {\sc H1} --- considering that
        all heuristics' performances are close, the chosen seed
        should lead to a somewhat representative level of ``median difficulty'' for
        other heuristics).
        Besides for avoiding extreme cases including trivial cases, we pick only one seed for each graph mainly
        to save computational time. We allocate $72$ hours for the exact algorithm
        and we have $25$ graphs. Our current experiments (using one seed for each graph) already make use of a high-performance cluster.

    \vspace{.1in}
 \noindent{\em On picking random samples for comparing heuristics:}
        When there are $m$ edges, the total number of possible graph states
        is $2^m$ (as each edge is either {\sc On} or {\sc Off}). We cannot afford to directly sample the graph states as it is too large.

        When the query limit $B$ is small, we actually can evaluate the heuristics
        {\em exactly} without resorting to random sampling at all.
        When we apply a query policy, the query cost only depends on the
        query answers received. If $99\%$ of the edges are not even queried, then their states are irrelevant, and we shouldn't spend effort sampling them. The query results can be described as binary vectors of size $B-1$.
        The $i$-th bit describes the $i$-th query's result.
        The $B$-th bit is not needed as the last query's result has no impact
        on the total query cost (we stop regardless anyway).
        For example, if $B=4$, then the query results are just one of the following $8$ types:
        $<0,0,0>$,
        $<0,0,1>$,
        $<0,1,0>$,
        $<0,1,1>$,
        $<1,0,0>$,
        $<1,0,1>$,
        $<1,1,0>$,
        $<1,1,1>$.
        If $B$ is small, then we can simply exhaustively go over all possible
        query results as there are only $2^{B-1}$ possibilities.
        In our experiments, for $B=5$ and $B=10$, we evaluate the heuristics
        exactly using this method.

        When $B$ is large, we do need to rely on random samples.
        To improve efficiency on sample evaluation, for example, for more efficient evaluation of {\sc Tree}, we draw samples in the following way.
        Suppose we have drawn a vector of size $B-1$:

        \[<\underbrace{0,1,0,1,1,\ldots,0}_{B-5},\underbrace{1,1,0,1}_4>\]

        Suppose we always
        pretend that there are only $5$ queries left.  Due to the nature of our
        proposed heuristic, when we get close to the query limit, i.e., when
        there actually $5$ queries left, the policy tree $\pi$ we generated can
        handle all scenarios that can possibility arise from this point on.
        That is, for the response vector shown above, the last $4$ bits do not
        affect $\pi$ at all and $\pi$ can handle all possible trailing bits (of length $4$).
        So we can expand the above response vector into $16$ vectors (do not
        change the first $B-5$ bits and use all combinations for the last $4$
        bits):

        \[<\underbrace{0,1,0,1,1,\ldots,0}_{B-5},\underbrace{0,0,0,0}_4>\]
        \[<\underbrace{0,1,0,1,1,\ldots,0}_{B-5},\underbrace{0,0,0,1}_4>\]
        \[<\underbrace{0,1,0,1,1,\ldots,0}_{B-5},\underbrace{0,0,1,0}_4>\]
        \[<\underbrace{0,1,0,1,1,\ldots,0}_{B-5},\underbrace{0,0,1,1}_4>\]
        \[\vdots\]
        \[<\underbrace{0,1,0,1,1,\ldots,0}_{B-5},\underbrace{1,1,1,1}_4>\]

        Once $\pi$ is derived, the expected query cost on the above $16$ vectors
        is already computed (essentially it is our integer program's objective
        value). That is, we compute once for the first part of the response
        vector (first $B-5$ bits) and we get the expected performance of $16$
        data points, which results in increased efficiency for evaluating samples.

        For the above reason, we generate the response vectors by generating
        $1000$ vectors of size $B-5$ without replacement. For each, we pad it
        into $16$ complete response vectors.  All together, we have $16,000$
        response vectors.  We always use the same seed $0$ for generating the
        vectors.  And once the vector set is generated, all heuristics use the
        same set of samples.

In our experiments, our heuristic won $11$ times and lost $2$ times
against the best other heuristic when $B=10$ (evaluated exactly without resorting
to random sampling).
Our heuristic won $13$ times and lost $6$ times against the best other heuristic
when $B=20$ (evaluated using random sampling).
We lost a few more times when $B=20$, which perhaps can be explained by
the additional randomness. 

\subsection{Hyper-parameters behind our heuristics}

We list all hyper-parameters used in our experiments.

{\sc Tree}: We record the best result for $B'\in\{4,5,6\}$.

{\sc RL}: We use $B'\in \{2,3\}$ to generate the action space, corresponding to an action space size of $3$ and $7$.
We applied the discrete SAC implementation from {\sc Tianshou}.  The hyper-parameters are as follows: actor learning rate is $0.0001$;
critic learning rate is $0.001$; number of epochs is $1000$; step per epoch is $1000$; step per collect is $20$; batch size is $128$; network
size is $[64,64]$.

{\sc MCTS}: The action space is the same as {\sc RL}. We apply epsilon-greedy MCTS with $\epsilon=0.2$. We run $1000$ simulations to generate
the next query.
When $B=20$, we have to reduce the simulation count for some graphs, as we only allow $72$ hours to run
MCTS for $16,000$ episodes.
For {\sc minnesota} and {\sc power}, we used $500$ simulations to generate the next query.
For {\sc bcspwr09} and {\sc bcspwr10}, we used $200$ and $50$ simulations to generate the next query, respectively.

\begin{figure}[t]
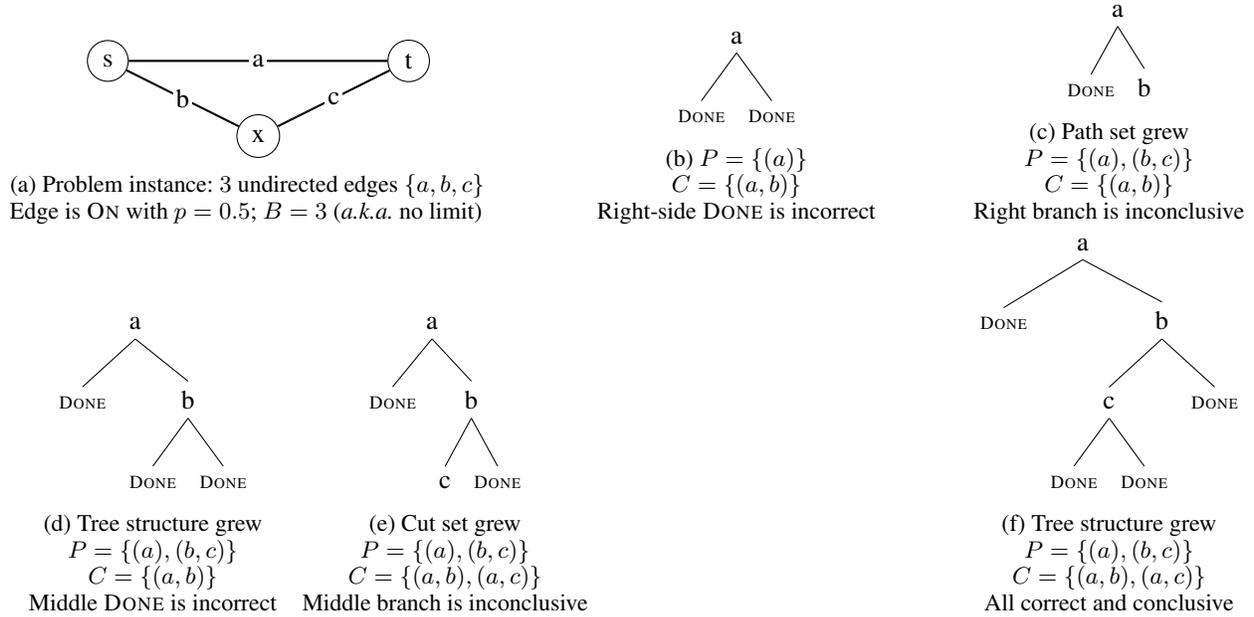

     \centering
     \begin{subfigure}[b]{0.40\textwidth}
         \centering
\tikz \graph [edge quotes={fill=white,inner sep=1pt},
          grow down, branch right, nodes={circle,draw}] {
    s --[thick, "a"] t[at={(4,1)}];
    s --[thick, "b"] x[at={(1,-1)}] --[thick, "c"] t
         };
         \caption{Problem instance: $3$ undirected edges $\{a,b,c\}$\\
         Edge is {\sc On} with $p=0.5$; $B=3$ ({\em a.k.a.} no limit)}
         \label{fig:example1leftclone}
     \end{subfigure}
     \hfill
     \begin{subfigure}[b]{0.23\textwidth}
\captionsetup{justification=centering}
         \centering
         \Tree [.a [.{\scriptsize {\sc Done}} ] [.{\scriptsize {\sc Done}} ] ]
         \caption{$P=\{(a)\}$\\$C=\{(a,b)\}$\\Right-side {\sc Done} is incorrect}
         \label{fig:step1}
     \end{subfigure}
     \hfill
     \begin{subfigure}[b]{0.23\textwidth}
\captionsetup{justification=centering}
         \centering
         \Tree [.a [.{\scriptsize {\sc Done}} ] [.b ] ]
         \caption{Path set grew\\
         $P=\{(a),(b,c)\}$\\$C=\{(a,b)\}$\\Right branch is inconclusive}
         \label{fig:step2}
     \end{subfigure}
     \hfill
     \begin{subfigure}[b]{0.23\textwidth}
\captionsetup{justification=centering}
         \centering
         \Tree [.a [.{\scriptsize {\sc Done}} ]
         [.b [.{\scriptsize {\sc Done}} ] [.{\scriptsize {\sc Done}} ] ]
]
         \caption{Tree structure grew\\
         $P=\{(a),(b,c)\}$\\$C=\{(a,b)\}$\\Middle {\sc Done} is incorrect}
         \label{fig:step3}
     \end{subfigure}
     \begin{subfigure}[b]{0.23\textwidth}
\captionsetup{justification=centering}
         \centering
         \Tree [.a [.{\scriptsize {\sc Done}} ]
         [.b [.c ] [.{\scriptsize {\sc Done}} ] ]
]
         \caption{Cut set grew\\
         $P=\{(a),(b,c)\}$\\$C=\{(a,b),(a,c)\}$\\Middle branch is inconclusive}
         \label{fig:step4}
     \end{subfigure}
     \hfill
     \begin{subfigure}[b]{0.23\textwidth}
\captionsetup{justification=centering}
         \centering
         \Tree [.a [.{\scriptsize {\sc Done}} ]
         [.b [.c [.{\scriptsize {\sc Done}} ] [.{\scriptsize {\sc Done}} ] ] [.{\scriptsize {\sc Done}} ] ]
]
         \caption{Tree structure grew\\
         $P=\{(a),(b,c)\}$\\$C=\{(a,b),(a,c)\}$\\All correct and conclusive}
         \label{fig:step5}
     \end{subfigure}
        \caption{A walk through of our exact algorithm}
        \label{fig:step1to5}
\end{figure}

\subsection*{A walk through of our exact algorithm}

In Figure~\ref{fig:step1to5}, we walk through our exact algorithm.
The problem instance is described in Figure~\ref{fig:example1leftclone},
        which is the same instance as the one in Figure~\ref{fig:example1left}.

As mentioned toward the end of our description of the exact algorithm, the ``outer'' and ``inner'' loops in our algorithm description are purely for presentation purposes. In implementation, in every iteration, we generate the optimal correct policy tree given the current path set $P$, the current cut set $C$ and the current tree structure $S$.
If the resulting policy tree makes any wrong claims (i.e., claiming {\sc Done} prematurely), then we
expand the path/cut set. If the tree goes into any inconclusive situation (i.e., reaching
the leave position and still cannot claim {\sc Done}), then
we expand the tree structure. The final converged policy must be optimal.

\begin{itemize}


    \item We initialize with a path set containing one shortest path $P=\{(a)\}$
        and a cut set containing one minimum cut $C=\{(a,b)\}$.
        We initialize the tree structure to 2 layers complete.

    \item Iteration 1 (Figure~\ref{fig:step1}): The optimal policy tree after the first iteration is illustrated. The right-side {\sc Done} is incorrect. Confirming
        $a$ is {\sc Off} does not disprove all paths. We add the missing path $(b,c)$ into $P$.

    \item Iteration 2 (Figure~\ref{fig:step2}): The right branch is inconclusive.
        We expand the right branch.

    \item Iteration 3 (Figure~\ref{fig:step3}): The middle {\sc Done} is incorrect.
        Confirming $b$ is {\sc On} does not disprove all cuts. We add
        the missing cut $(a,c)$ into $C$.

    \item Iteration 4 (Figure~\ref{fig:step4}): The middle branch is inconclusive.
        We expand the middle branch.

    \item Iteration 5 (Figure~\ref{fig:step5}): All leaves are {\sc Done} and are all correct. This is the optimal policy tree identical to Figure~\ref{fig:example1right}.

\end{itemize}

\begin{figure}[t]
    \includegraphics[width=0.6\linewidth]{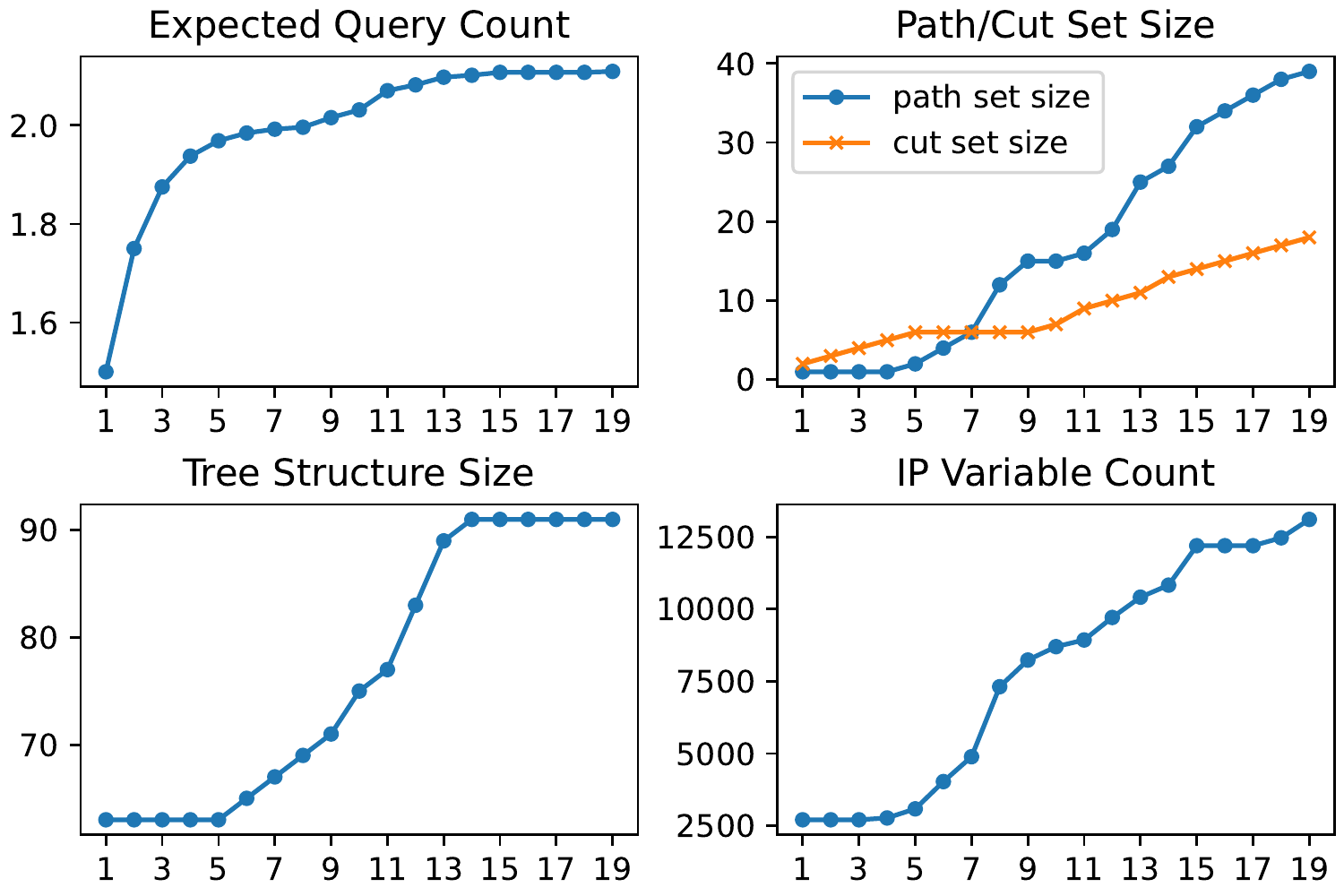}
\centering
    \caption{Exact algorithm on {\sc euroroad} with $B=10$: successfully find the optimal solution in 19 iterations}
    \label{fig:euroroad}
\end{figure}

\begin{figure}[t]
    \includegraphics[width=0.6\linewidth]{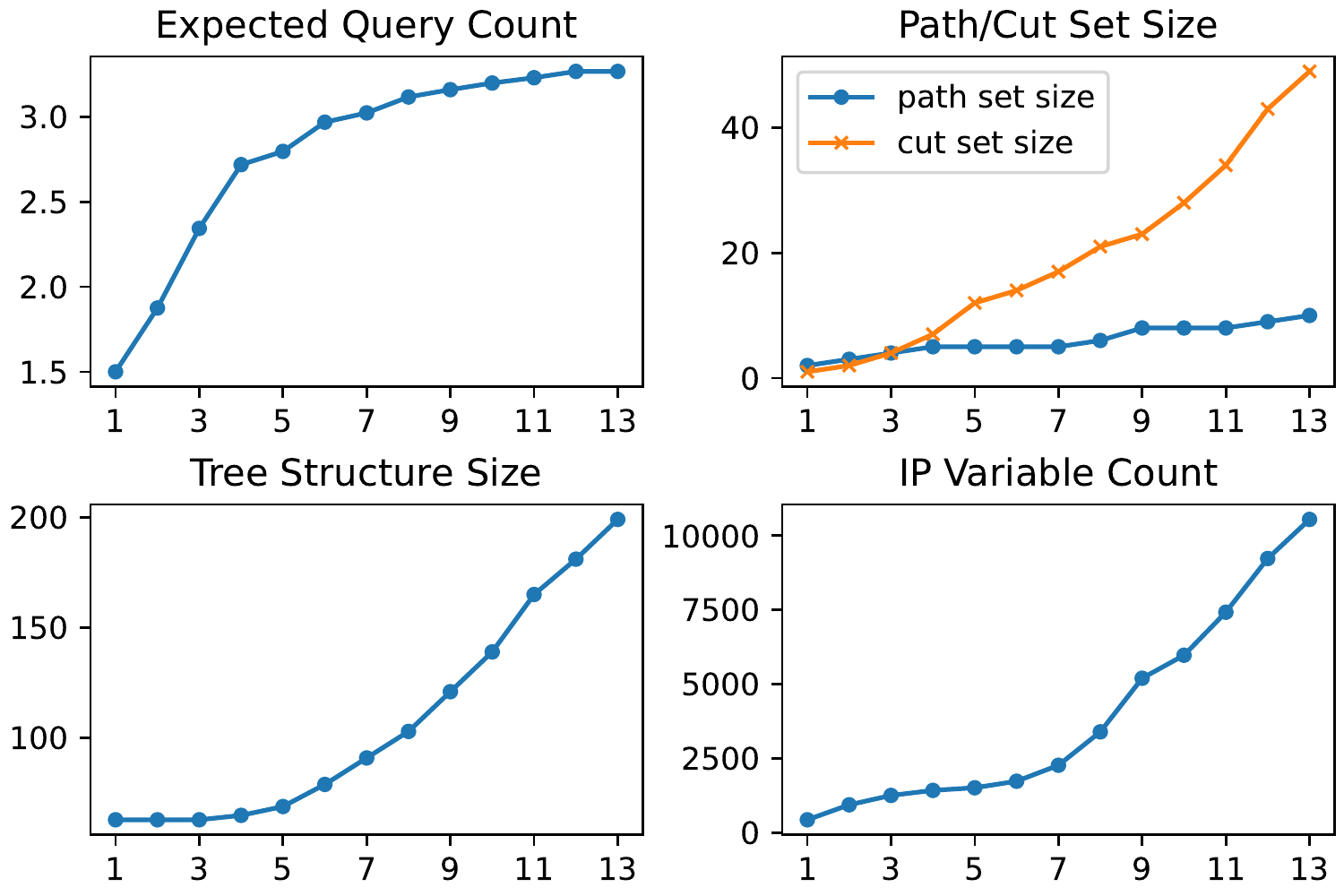}
\centering
    \caption{Exact algorithm on {\sc r4000} with $B=10$: time out (72 hours) after 13 iterations, find a lower bound that is at most $2.3\%$ away from optimality}
    \label{fig:r4000}
\end{figure}

\begin{figure}[t]
    \includegraphics[width=0.4\linewidth]{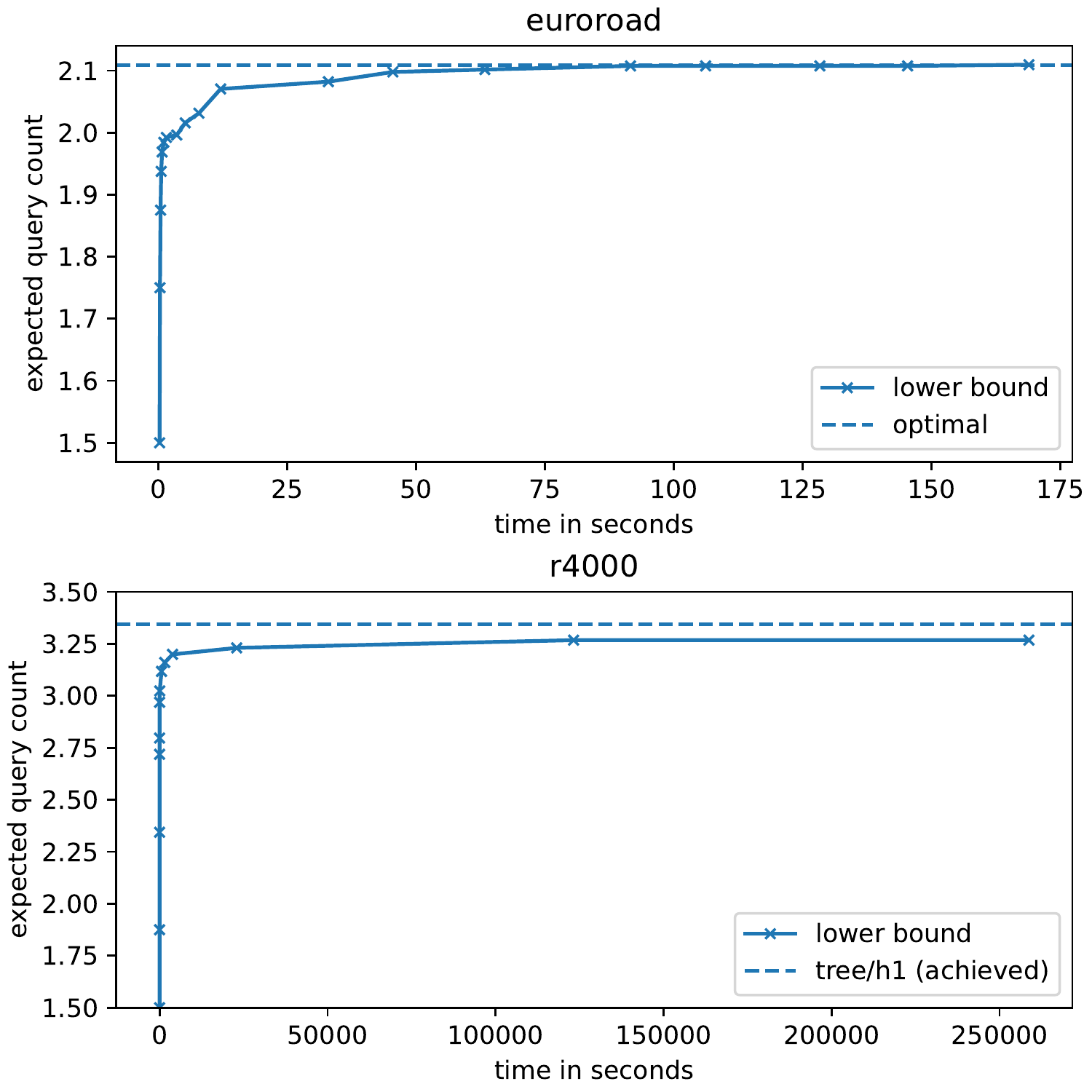}
\centering
    \caption{Exact algorithm on {\sc euroroad} and {\sc r4000} with $B=10$: {\sc euroroad} converged in $3$ minutes and the final result is proven optimal; {\sc r4000}
    took $72$ hours to get to $2.3\%$ away from results achieved by heuristics.}
    \label{fig:time}
\end{figure}

\subsection*{Running details on two large graphs}

In Figure~\ref{fig:euroroad} and Figure~\ref{fig:r4000}, we present the
running details of our exact algorithm on two graphs with a query limit of $10$.

Figure~\ref{fig:euroroad} is on {\sc euroroad} (undirected, $1175$ nodes and
$1417$ edges). The exact algorithm terminates in $19$ iterations, successfully
identifying the optimal query policy.  The final path set has a size of
$|P|=39$.  The final cut set has a size of $|C|=18$.  These numbers show that
our exact algorithm successfully managed to pick out only a small number of
relevant paths and cuts for decision making.  The final tree structure size is
$91$. Considering that when the algorithm finishes, the leaf nodes must be
{\sc Done} (except when reaching the query limit), the number of query nodes in the
final policy tree should be much less than $91$.
On the
contrary, if we do not use our ``tree structure growth'' idea, and simply use a complete
tree, then the number of query nodes is $2^{10}-1=1023$.
Our tree growth idea managed to significantly
reduce the search space (from making about $1023$ decisions to much less than $91$ decisions).

Figure~\ref{fig:r4000} is on {\sc r4000} (directed, $12001$ nodes and $45781$
edges). The exact algorithm failed to terminate but it successfully produced a
close lower bound $3.268$ ($3.344$ is achieved by our heuristic, so
the lower bound is at most $2.3\%$ away from optimality).
The main scalability bottleneck for this graph is the tree size. It gets to $199$
at the end of $72$ hours.
Because the algorithm did not terminate, not all leaf nodes were {\sc Done}
among those $199$ nodes.
The number of query nodes is a lot more than that of {\sc euroroad}.

Finally, in Figure~\ref{fig:time}, for {\sc euroroad} and {\sc r4000} with $B=10$,
we plot the lower bound (expected query count after each iteration of the exact algorithm) against time spent by
the exact algorithm.
For {\sc euroroad}, it took less than $3$ minutes to reach confirmed optimality.
For {\sc r4000}, it took $72$ hours (time out) to reach within $2.3\%$ away
from results achieved by heuristics,
which is also within $2.3\%$ away from optimality.

\begin{algorithm}[h!]
\caption*{Exact Algorithm for Limited Query Graph Connectivity Test}
\textbf{Input}: Graph $G$ with source $s$ and destination $t$\\
    Query limit $B$\\
    Initial path set $P$ (i.e., a singleton set with one shortest path)\\
    Initial cut set $C$ (i.e., a singleton set with one minimum cut)\\
    Initial tree structure $S$ (i.e., root node only)\\
\begin{algorithmic}[1]
    \WHILE {{\sc True}}
    \STATE {\sc Converged} $\leftarrow$ {\sc True}
    \STATE Generate the optimal correct policy tree $T(S,P,C)$ via integer program
    with input $S$, $P$ and $C$.
    \FOR {every node $i$ in $T(S,P,C)$}
        \IF {node $i$ claimed {\sc Done} by disproving all paths in $P$ \AND node $i$ has not established a cut}
        \STATE Find a path that is still possible (involving only
        {\sc On} or hidden edges) via Dijkstra's shortest path algorithm (assign high weights on {\sc Off} edges). Add this path to $P$.
        \STATE {\sc Converged} $\leftarrow$ {\sc False}
        \ENDIF
        \IF {node $i$ claimed {\sc Done} by disproving all cuts in $C$ \AND node $i$ has not established a path}
        \STATE Find a cut that is still possible (involving only
        {\sc Off} or hidden edges) via weighted minimum cut (assign high weights on {\sc On} edges). Add this cut to $C$.
        \STATE {\sc Converged} $\leftarrow$ {\sc False}
        \ENDIF
        \IF {leaf node $i$ is inconclusive (i.e., not reaching the query limit and
        not {\sc Done}}
        \STATE Expand $S$ by attaching two child nodes to $i$.
        \STATE {\sc Converged} $\leftarrow$ {\sc False}
        \ENDIF
    \ENDFOR
        \IF {{\sc Converged}}
        \RETURN $T(S,P,C)$
        \ELSE
        \STATE Enter the next iteration with updated $S$, $P$, $C$.
        \ENDIF
\ENDWHILE
\end{algorithmic}
\end{algorithm}

\subsection*{Pseudocode and integer program details}

We first describe the integer program, whose task is to optimally place
the queries into a given tree structure, while ensuring that any {\sc Done}
node placed either disproves the given path set $P$ or disproves the given
cut set $C$ (i.e., the output is a correct policy tree). The goal is to minimize the resulting tree's cost.

\noindent {\bf Integer program's input:} Path set $P$, Cut set $C$, Tree structure $S$

\noindent {\bf Variables:} Let $E^R$ be the set of edges referenced in either $P$ or $C$.

\begin{itemize}

    \item
For every tree node $i\in S$, for every edge $e\in E^R$,
binary variable $v_{e,i}$ being $1$ means we will perform query $e$ at node $i$.

    \item
        For every tree node $i\in S$, binary variable $v_{\textsc{Done},i}$ being $1$ means it is correct to claim {\sc Done} at node $i$.

\end{itemize}

\noindent {\bf Minimization objective:} Let $\textsc{Prob}(i)$ be the probability
of reaching node $i$. This is a constant depending only on the tree structure (therefore part of the input). For example, if it takes $3$ left turns ($3$ {\sc On} results) and $2$ right turns (2 {\sc Off} results) to reach $i$, then
$\textsc{Prob}(i)=p^3(1-p)^2$.

\[\textsc{Minimize:}\quad \sum_{i\in T}\textsc{Prob}(i)\{\sum_{e\in E^R}v_{e,i}\}\]

\noindent {\bf Useful notation:}

\begin{itemize}

    \item $\textsc{Parent}(i)$ is the parent of node $i$.

    \item $\textsc{route}(i)$ is the set of nodes along the route from root to node $i$ (inclusive).

    \item $\textsc{lnodes}$ is the set of tree nodes who are left children of their parents.
    \item $\textsc{rnodes}$ is the set of tree nodes who are right children of their parents.

    \item For $i\in \textsc{lnodes}$,
\[N(i)=\{\textsc{Parent}(j)|j \in \textsc{lnodes}\cap \textsc{route}(i)\}\]
    \item For $i\in \textsc{rnodes}$,
\[N(i)=\{\textsc{Parent}(j)|j \in \textsc{rnodes}\cap \textsc{route}(i)\}\]

\end{itemize}

\noindent {\bf Subject to:}

\[\forall i, \sum_{e\in E^R}v_{e,i}+v_{\textsc{Done},i} =1\]

\[\forall i, v_{\textsc{Done},i}\ge v_{\textsc{Done},\textsc{Parent}(i)} \]

\[\forall i\in \textsc{Leaves},\forall e\in E^R, \sum_{j\in \textsc{route}(i)} v_{e,j}\le 1 \]

\[
    \forall\textsc{cut}\in C,\forall i \in \textsc{lnodes},\]
\[\sum_{j\in N(i),e\in\textsc{cut}}v_{e,j}-(v_{\textsc{Done},i}-v_{\textsc{Done},\textsc{Parent}(i)}) \ge 0
\]
\[
    \forall\textsc{path}\in P,\forall i \in \textsc{rnodes},\]
\[\sum_{j\in N(i),e\in\textsc{path}}v_{e,j}-(v_{\textsc{Done},i}-v_{\textsc{Done},\textsc{Parent}(i)}) \ge 0
\]

\subsection*{Proof of Theorem~\ref{thm:sharpp} (\#P-hardness)}

\cite{Fu17:Determining} already derived a \#P-hardness proof for their model.  In the authors' setting, the query cost of
an edge can be an exponential value.  {\em The assumption of exponential costs is core to the authors'
hardness proof, which is very restrictive and not applicable for many practical applications.} The main idea behind the authors' proof was that if there are two parallel subgraphs connecting the source and
the destination, and both involve an exponentially expensive edge and the
rest of the subgraph, then if we are capable of correctly choosing between the two parallel routes,
then that means we can accurately assess the reliability of subgraphs, which
is \#P-hard. Our proof works for unit cost.
Our result is stronger as it shows hardness for a more restrictive
setting.


\begin{figure}[h]
\centering
\includegraphics[width=0.4\linewidth]{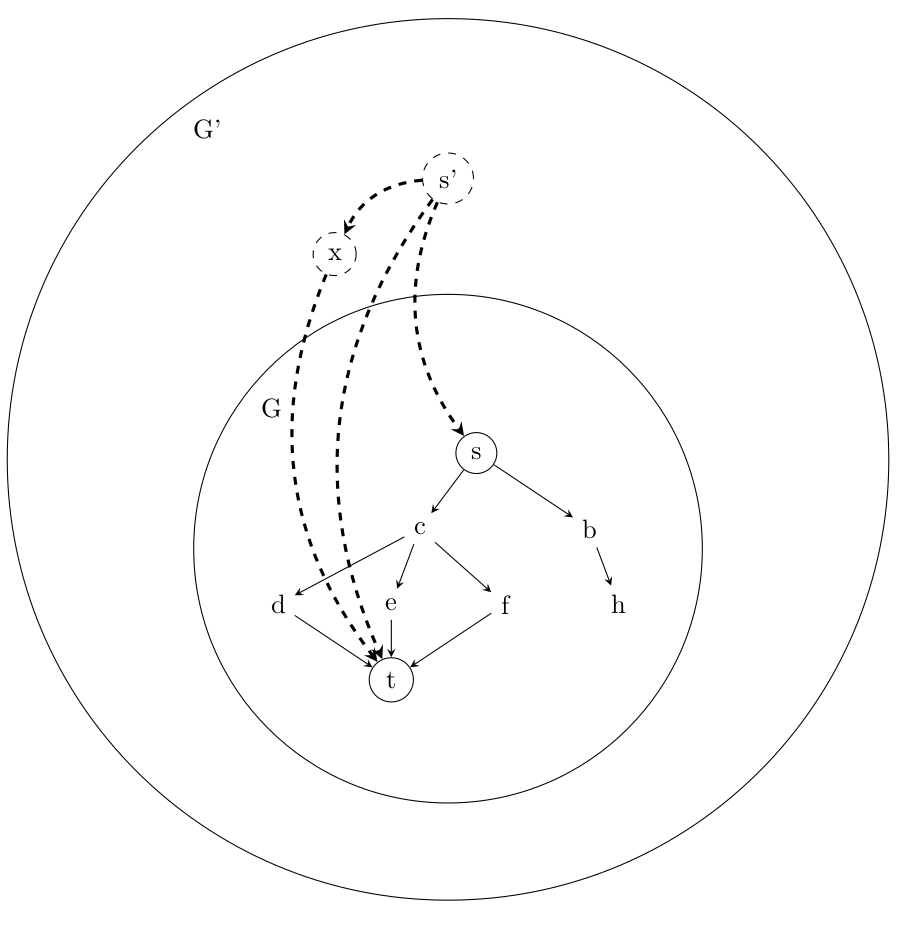}
    \caption{\#P-hardness proof illustration}
\label{fig:sharpp}
\end{figure}

\begin{proof}[Proof of Theorem~\ref{thm:sharpp}]

We reduce the task of computing the minimum expected number of queries to the network
    reliability problem~\cite{Valiant79:Complexity}, which is known to be \#P-hard.

    Let us consider an instance graph $G$ with $s$ and $t$ being the source
    and sink, respectively.
    We set $p=0.5$ and remove the query limit (i.e., set it to be the same as the number of edges).
    Let the number of edges of $G$ be $m$.
    As shown in Figure~\ref{fig:sharpp}, we construct
    a larger graph $G'$ containing $G$ by introducing two additional nodes $s'$ and
    $x$. We also introduce four additional edges as described below.
    It should be noted that $s'\rightarrow s$, $s'\rightarrow t$ and $s'\rightarrow x$ violate our uniform distribution assumption (i.e., their probabilities for
    being {\sc On} is not $0.5$). They are only temporary proof constructs for
    presentation purposes.
    We will remove them by the end of the proof using a simple idea:
    An edge that is {\sc On} with probability $1$ is simply replaced by
    $2m$ parallel edges\footnote{Adding parallel edges creates a multigraph.
    We could also remove the need for multigraphs by introducing auxiliary intermediate nodes.} that are {\sc On} with probability $0.5$.
    Similarly, an edge that is {\sc On} with probability $0$ is simply replaced by
    $2m$ serial edges that are {\sc On} with probability $0.5$.

    \begin{itemize}

        \item $s'\rightarrow s$, which is {\sc On} with probability $1$.
        \item $s'\rightarrow t$, which is {\sc On} with probability $0$.
        \item $s'\rightarrow x$, which is {\sc On} with probability $0$.
        \item $x\rightarrow t$, which is {\sc On} with probability $0.5$.

    \end{itemize}

    Let $q(G,s,t)$ be the minimum expected number of queries for instance
    $G$ with source $s$ and destination $t$.

    Let $q(G',s',t)$ be the minimum expected number of queries for instance
    $G'$ with source $s'$ and destination $t$.

    Let $Rel(G,s,t)$ be the reliability value of $G$ with
    source $s$ and destination $t$.  Defined in \cite{Valiant79:Complexity}, the
    reliability value $Rel(G,s,t)$ is the probability that there is a path from
    $s$ to $t$, assuming that every edge of $G$ has $50\%$ chance of
    missing. In the context of our model, $Rel(G,s,t)$ is simply the probability
    that there exists a path from $s$ to $t$.

    Now let us consider how to compute $q(G',s',t)$.
    Edge $s'\rightarrow s$ is always {\sc On}, so we would query it if and only
    if there is a path from $s$ to $t$ in $G$, and $s'\rightarrow s$ will only be queried after we find a path.
    Edge $s'\rightarrow x$ and $s'\rightarrow t$ are both always {\sc Off},
    so we would query these two if and only if there is a cut in $G$ between $s$ and $t$, and these two edges will only be queried after we find a cut.
    In summary, the optimal policy for instance $G'$ (from $s'$ to $t$) is that we should
    first solve instance $G$ (from $s$ to $t$). Then based on whether a cut or a path
    exists, we query either one additional edge ($s'\rightarrow s$) or
    two additional edges ($s'\rightarrow x$ and $s'\rightarrow t$).


    Next, we construct graph $G''$ based on $G'$, where we replace
            $s'\rightarrow s$ by $2m$ parallel edges
            and
        replace $s'\rightarrow t$ and $s'\rightarrow x$ by $2m$ serial edges.
    Let us consider $q(G'',s',t)$'s lower and upper bounds.

    To calculate $q(G'',s',t)$'s lower bound, we assume that we have
    access to an oracle that informs us whether $s'\rightarrow s$, $s'\rightarrow t$
    and $s'\rightarrow x$ are {\sc On} before we start the query process.
    With $1-O(\frac{1}{2^{2m}})$ probability,
    these edges are as expected (i.e., we expect $s'\rightarrow s$ to be {\sc On},
    and we expect both $s'\rightarrow t$ and $s'\rightarrow x$ to be {\sc Off}).
    When these edges are as expected, like before,
        we should query $s'\rightarrow s$ only after establishing a path, and
        we should query both $s'\rightarrow t$ and $s'\rightarrow x$ only after establishing a cut.
        Querying $2m$ parallel/serial edges takes $2-O(\frac{1}{2^{2m}})$ queries in expectation.
    With $O(\frac{1}{2^{2m}})$ probability, the above three
    edges are not as expected, in which case we simply assume the query count is $0$ for
    the purpose of calculating the lower bound.

    In summary, $q(G'',s',t)$ has a lower bound of
    \[(1-O(\frac{1}{2^{2m}}))\{q(G,s,t)+ (2-O(\frac{1}{2^{2m}})) \cdot Rel(G,s,t) +\]
    \[2 (2-O(\frac{1}{2^{2m}}))\cdot (1-Rel(G,s,t))\}\]
    \[\ge q(G,s,t)+ (2-O(\frac{1}{2^{2m}})) \cdot Rel(G,s,t) + \]
    \[2 (2-O(\frac{1}{2^{2m}}))\cdot (1-Rel(G,s,t)) - O(\frac{m}{2^{2m}})\]
    \[=q(G,s,t)+ 2 \cdot Rel(G,s,t) + 4 \cdot (1-Rel(G,s,t)) - O(\frac{m}{2^{2m}})\]
    \[=q(G,s,t)- 2 \cdot Rel(G,s,t) + 4 - O(\frac{m}{2^{2m}})\]

    To calculate $q(G'',s',t)$'s upper bound, we simply construct a
    feasible policy.  The policy first solves $q(G,s,t)$.  We query
    $s'\rightarrow s$ only after establishing a path, and we query both
    $s'\rightarrow t$ and $s'\rightarrow x$ only after establishing a cut.
    If $s'\rightarrow s$, $s'\rightarrow t$ or $s'\rightarrow x$ are not
    as expected ($O(\frac{1}{2^{2m}})$ chance), then we simply query every edge.
    The expected query count under this policy is then an upper bound on $q(G'',s',t)$, which is
    \[q(G,s,t)+ (2-O(\frac{1}{2^{2m}})) \cdot Rel(G,s,t) + \]
    \[2 (2-O(\frac{1}{2^{2m}}))\cdot (1-Rel(G,s,t)) + O(\frac{m}{2^{2m}})\]
    \[=q(G,s,t)- 2 \cdot Rel(G,s,t) + 4 + O(\frac{m}{2^{2m}})\]

    Combing the lower and upper bound, we have that
    \[Rel(G,s,t)= \frac{q(G,s,t) - q(G'',s,t)+4}{2} \pm O(\frac{m}{2^{2m}})\]

    An error of $O(\frac{m}{2^{2m}})$ does not matter when it comes to calculating $Rel(G,s,t)$, simply because $Rel(G,s,t)$ can only take values from the following set $\{0,\frac{1}{2^m},\frac{2}{2^m},\ldots,1\}$.
    Therefore, our Limited Query Graph Connectivity Test problem is as difficult
    as the network reliability problem.

\end{proof}

\subsection{More Discussion on Related Research}

\subsubsection{Stochastic Boolean Function Evaluation}

If we remove the query limit (i.e., set $B$ to the total number of edges), then
our problem becomes a special case of (monotone) {\em Stochastic Boolean
Function Evaluation (SBFE)}~\cite{Allen17:Evaluation,Deshpande14:Approximation}.
A SBFE instance involves a boolean function $f$
with multiple {\em binary} inputs and one {\em binary} output.  The input bits
are initially hidden and their values follow independent but not necessarily
identical Bernoulli distributions.
Each input bit has a query cost. The task is
to query the input bits in an adaptive order, until there is enough
information to determine the output of $f$. The goal is to minimize the
expected query cost.

We illustrate how to describe the problem instance in
Figure~\ref{fig:example1left} as a {\em unit cost uniform distribution}\footnote{{\em unit cost}: every query costs the same; {\em uniform distribution}: the input bits follow an i.i.d. Bernoulli distribution.} SBFE where
the boolean function is a {\em Disjunctive Normal Form (DNF)}.
To describe our problem instances as SBFE, we treat {\sc On} as $0$ and
treat {\sc Off} as $1$.
The DNF is then $f(a,b,c)=(a\wedge b)\vee (a\wedge c)$.
The expression basically says that $f$ is $1$ if and only if a cut exits,
since $a\wedge b$ and $a\wedge c$ are the only cuts in Figure~\ref{fig:example1left}.
Similarly, we could come up with a DNF (or CNF) in terms of paths, noting that
$\neg a$ and $\neg b\wedge\neg c$ are the only paths:

\begin{center}
DNF: $\neg f(a,b,c)=(\neg a)\vee (\neg b\wedge\neg c)$

CNF: $f(a,b,c)=(a)\wedge ( b\vee c)$
\end{center}

Our $f$ is always {\em monotone}. For example, if $f=1$, meaning that a cut exists,
then flipping one more edge from {\sc On}=$0$ to {\sc Off}=$1$ will never create a path.

Even though our techniques are developed mainly for our graph-based model and
our algorithm references graph properties,
{\em our techniques should be directly applicable to monotone unit cost uniform distribution SBFE, with or without the query limit.}
The reason is that the few graph properties we rely on also
hold for monotone SBFE in general. For example, one property we use is that any path and any cut must
intersect, but this is true for monotone SBFE in general, in the sense that any $0$-certificate (a certificate that contains only $0$s, i.e., a path involving only {\sc On} edges in the context of our model) and any $1$-certificate (a certificate that contains only $1$s, i.e., a cut involving
only {\sc Off} edges in the context of our model) must intersect.

\subsubsection{Existing Graph Connectivity Query Models}

The model in \cite{Fu17:Determining} is the closest related. The author's model
is essentially non-unit cost non-uniform distribution graph connectivity test, certainly without the query limit.  \cite{Fu17:Determining}
experimentally compared the exact algorithm (same as
\cite{Cox89:Heuristic}) against several existing heuristics.
Due to scalability, the exact algorithm was only evaluated on graphs with $20$
edges.
\cite{Fu16:Are} studied a model where there is an Erdos-Renyi graph with a source and a destination.
The goal is to spend the minimum number of queries to check whether the source and the destination is connected. In the context of our model, essentially the authors focused on {\em complete} graphs where every edge
has the same chance for being {\sc On}.
The authors proved an optimal algorithm, which is equivalent to {\sc H1}~\cite{Jedrzejowicz83:Minimizing,Cox89:Heuristic}.
\cite{Fotakis22:Graph} studied the problem of establishing a spanning tree
using minimum queries. The authors studied noisy queries that may return wrong
edge states.

\subsection*{Advantage of model-free approaches}

{\sc RL} and {\sc MCTS} are model-free approaches so they generalise to more complex models.
Let us consider an example instance with interdependent edges, depicted in Figure~\ref{fig:interdependent}.
(Our paper assumes independent edges. Interdependent edges are only discussed in this section.)
For Figure~\ref{fig:interdependent}, {\sc H1} may start from any edge, because every edge is an intersection
of minimum cut and shortest path. Assume that {\sc H1} starts with $b$. In this case, there is no way
to guarantee that conclusion is reached within two queries. If the second query is $c$, then the third
query is always needed. If the second query is either $a$ or $d$, then there is $50\%$ chance of needing
the third query (i.e., when $a$ turns out to be {\sc Off}).
On the other hand, if we start with $a$ (or $d$), then we can guarantee to finish within two queries.
Due to symmetry, it is without loss of generality to assume that $a$ is {\sc On} and $d$ is {\sc Off} after the first query.
By querying $b$ next, we conclude regardless of $b$'s state. If $b$ is {\sc On}, then we have a path.
If $b$ is {\sc Off}, then we have a cut. In summary, for Figure~\ref{fig:interdependent}, there
is a correct way to start, which can easily be picked up by {\sc RL} and {\sc MCTS}, but heuristic like {\sc H1}
is unable to recognise this.

We leave model generalisation to future research.






\begin{figure}
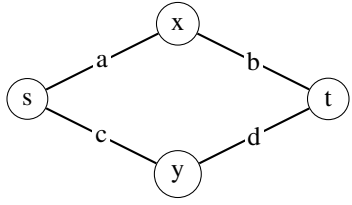

\centering
\captionsetup{justification=centering}
\tikz \graph [edge quotes={fill=white,inner sep=1pt},
          grow down, branch right, nodes={circle,draw}] {
    s --[thick, "a"] x[at={(2,2)}] --[thick, "b"] t[at={(4,2)}];
    s --[thick, "c"] y[at={(1,-1)}] --[thick, "d"] t;
         };
         \caption{Problem instance with {\bf interdependent} edges:\\
         $4$ undirected edges $\{a,b,c,d\}$\\
         Edge is {\sc On} with $p=0.5$\\
         Assume $a$ and $d$'s states are always opposite}
         \label{fig:interdependent}
\end{figure}

\end{document}